\documentclass[11pt]{article}
\usepackage{fullpage,times}

\usepackage[utf8]{inputenc}
\usepackage[T1]{fontenc}

\usepackage{amsmath}
\usepackage{amssymb}
\usepackage{bbm}
\usepackage{amstext}
\usepackage{amsfonts}
\usepackage{amsthm}
\usepackage{mathtools}
\usepackage{scrextend}    
\usepackage{array}     
\usepackage{multicol}    
\usepackage{mathrsfs}    
\usepackage{color}
\usepackage{hyperref}
\usepackage{cleveref}
\usepackage{bm}
\usepackage{enumitem}
\usepackage{upgreek}
\usepackage{mleftright}
\usepackage[mathscr]{euscript}
\usepackage{thm-restate}
\usepackage[linesnumbered,ruled,vlined]{algorithm2e}

\newtheorem{theorem}{Theorem}[section]    
\newtheorem{definition}{Definition}[section] 
\newtheorem{corollary}[theorem]{Corollary}    
\newtheorem{lemma}[theorem]{Lemma}    

\renewcommand{\qed}{\hfill{$\rule{6pt}{6pt}$}} 
\renewenvironment{proof}{\noindent{\bf Proof:}}{\qed\\}


\numberwithin{equation}{section}

\newcommand{\complex}{{\mathbb C}}

\newcommand{\integers}{{\mathbb Z}}
\newcommand{\naturals}{{\mathbb N}}

\newcommand{\norm}[1]{\left\| #1 \right\|}
\newcommand{\trnorm}[1]{\left\| #1 \right\|_{\mathrm{tr}}}
\newcommand{\size}[1]{\left| #1 \right|}
\newcommand{\set}[1]{\left\{ #1 \right\}}

\newcommand{\ceil}[1]{{\lceil #1 \rceil}}

\DeclareMathOperator{\trace}{Tr}
\DeclareMathOperator{\support}{supp}
\DeclareMathOperator{\range}{range}
\DeclareMathOperator{\Order}{O}
\DeclareMathOperator{\order}{o}

\DeclareMathOperator{\expct}{{\mathbb E}}

\DeclareMathOperator{\e}{e}
\newcommand{\complexi}{{\mathrm{i}}}

\newcommand{\id}{{\mathbb I}}


\newcommand{\ket}[1]{| #1 \rangle}

\newcommand{\ketbra}[2]{| #1 \rangle\!\langle #2 |}
\newcommand{\braket}[2]{\langle #1 | #2 \rangle }
\newcommand{\density}[1]{\ketbra{#1}{#1}}

\newcommand{\eqdef}{\coloneqq}
\newcommand{\tensor}{\otimes}

\newcommand{\union}{\cup}
\newcommand{\intersect}{\cap}

\newcommand{\adjoint}{*}
\newcommand{\setcomp}[1]{\overline{#1}}

\newcommand{\suppress}[1]{}

\newcommand{\etal}{\emph{et al.\/}}


\newcommand{\speqdef}{\quad \eqdef \quad}
\newcommand{\speq}{\quad = \quad}
\newcommand{\sple}{\quad \le \quad}
\newcommand{\spge}{\quad \ge \quad}


\newcommand{\cA}{\mathcal{A}}
\newcommand{\cB}{\mathcal{B}}

\newcommand{\cT}{\mathcal {T}}

\DeclareMathOperator{\VC}{VC\mbox{-}dim}
\DeclareMathOperator{\QPEX}{QPEX}

\newcommand{\scC}{{\mathscr C}}

\newcommand{\bi}{{\bm i}}
\newcommand{\bI}{{\bm I}}
\newcommand{\bj}{{\bm j}}
\newcommand{\bx}{{\bm x}}
\newcommand{\bX}{{\bm X}}
\newcommand{\by}{{\bm y}}
\newcommand{\bb}{{\bm b}}

\newcommand{\N}{\naturals}
\newcommand{\ms}{\mathrm{ms}}
\newcommand{\supp}{\mathrm{supp}}

\DeclareMathOperator{\opt}{opt}

\title{\textbf{Proper versus Improper Quantum PAC Learning}}

\author{
Ashwin Nayak~\thanks{Department of Combinatorics and Optimization,
and Institute for Quantum Computing, University
of Waterloo, 200 University Ave.\ W., Waterloo, ON,
N2L~3G1, Canada.
Email: \texttt{academic@ashwinnayak.info}~.
} \\
University of Waterloo
\and
Pulkit Sinha~\thanks{School of Computer Science,
and Institute for Quantum Computing, University
of Waterloo, 200 University Ave.\ W., Waterloo, ON,
N2L~3G1, Canada.
Email: \texttt{psinha@uwaterloo.ca}~.
} \\
University of Waterloo}

\date{February 22, 2024}

\begin{document}

\maketitle

\begin{abstract}
A basic question in the PAC model of learning is whether \emph{proper\/} learning is harder than \emph{improper} learning. In the classical case, the answer to this question, with respect to sample complexity, is known to depend on the concept class. While there are concept classes for which the two modes of learning have the same complexity, there are examples of concept classes with VC dimension~$d$ that have sample complexity~$\Omega \left( \frac{d}{\epsilon}\log \frac 1 \epsilon \right)$ for proper learning with error~$\epsilon$, while the complexity for improper learning is~$\Order \!\left( \frac{d}{\epsilon}\right) $. One such example arises from the Coupon Collector problem.

Motivated by the efficiency of proper versus improper learning with \emph{quantum\/} samples, Arunachalam, Belovs, Childs, Kothari, Rosmanis, and de Wolf~\cite{ABCKRW20-quantum-coupon-collector} studied an analogue, the Quantum Coupon Collector problem. Curiously, they discovered that for learning size~$k$ subsets of~$[n]$ the problem has sample complexity $\Theta(k \log \min \set{k, n-k + 1})$, in contrast with the complexity of~$\Theta(k \log k)$ for Coupon Collector. This effectively negates the possibility of a separation between the two modes of learning via the quantum problem, and Arunachalam \etal\ posed the possibility of such a separation as an open question.

In this work, we first present an algorithm for the Quantum Coupon Collector problem with sample complexity that matches the sharper lower bound of~$ (1 - \order_k(1)) k \ln \min \set{k, n-k+1}$ shown recently by Bab Hadiashar, Nayak, and Sinha~\cite{BNS24-sample-complexity}, for the entire range of the parameter~$k$. Next, we devise a variant of the problem, the Quantum \emph{Padded\/} Coupon Collector. We prove that its sample complexity matches that of the classical Coupon Collector problem for both modes of learning, thereby exhibiting the same asymptotic separation between proper and improper quantum learning as mentioned above. The techniques we develop in the process can be directly applied to any form of padded quantum data. We hope that padding can more generally lift other forms of classical learning behaviour to the quantum setting.
\end{abstract}

\section{Introduction}
\label{sec-introduction}

One of the most fundamental tasks in machine learning is to predict certain properties of objects, given access to labeled data for other objects drawn from the same distribution. Problems like these form the basis of Supervised Machine Learning. Setting aside the algorithmic aspects of the problem, a natural question to ask is how many labeled data are required to achieve a certain degree of accuracy in the prediction?

To study this formally, Valiant~\cite{val84-PAC} developed the \textit{probably approximately correct\/} (PAC) model for such learning tasks. In this model, binary properties are thought of boolean functions, mapping each object in the population to a 0 or 1 corresponding to whether they have the property or not. Depending on the learning task at hand, we may have some prior knowledge about the boolean functions that may be allowed. Each boolean function is called a \textit{concept}, and the set of allowed functions, a \textit{concept class}. Given labeled data according to an unknown concept in the concept class and an unknown distribution on the population, the goal is to reconstruct a concept (with probability at least~$1-\delta$) which on the same distribution matches with the unknown concept with probability $(1-\epsilon)$. The data needed is formalised as the sample complexity, which is simply the number of labeled samples. This is called the $(\epsilon,\delta)$-PAC sample complexity of the learning task. 

Formalised this way, the hardness of the learning task only really depends on the concept class. Turns out, as shown in a series of works~\cite{BEHW89-Learn-VC,Han16-opt-PAC}, this hardness is almost fully characterized by the \textit{VC dimension} $d$ of the concept class, and the sample complexity has been shown to be 
\begin{equation}
\label{eq:PAC-sample-complexity_intro}
    \Theta \left( \frac{d}{\epsilon} + \frac{\log (1/\delta)}{\epsilon}\right) \enspace.
\end{equation}

This learning model was extended to the quantum setting by Bshouty and Jackson~\cite{BJ99-DNF}, where random samples of labeled data were replaced with \textit{quantum\/} samples. In quantum samples, the data points are in a superposition, weighted with the square root of the corresponding probability. Learning using these samples is readily seen to be at least as easy as with classical samples, as we can apply classical learning techniques on the distribution obtained after measuring in the standard basis. Hence the upper bound on sample complexity in Eq.~\eqref{eq:PAC-sample-complexity_intro} applies. Arunachalam and de Wolf~\cite{AW18-optimal-sample} showed that the lower bound in \eqref{eq:PAC-sample-complexity_intro} also holds in the quantum setting. (See also Ref.~\cite{BNS24-sample-complexity} for an arguably simpler, information-theoretic proof.)

Interestingly, optimal PAC learners are not guaranteed to output a concept within the concept class. The optimal learner, even though equipped with information about the possible actual classifications, can choose to not classify based on any of those. This kind of learning, where the output concept is not guaranteed to belong to the concept class, is called \textit{improper learning}. If on the other hand, we enforce the condition that the PAC learner only output concepts from within the class, then the learning task is called \textit{proper learning}. 

Does the requirement of proper learning make the task harder? In the classical case, the answer to this question, with respect to sample complexity, is known to depend on the concept class. For the concept class containing all possible concepts, the answer is a clear no, as proper and improper learning are equivalent. On the other hand, for each $d$ and $\epsilon$, there are examples of concept classes that have sample complexity
\begin{equation}
\label{eq-proper-cc-bd}
\Omega \left( \frac{d}{\epsilon}\log \frac 1 \epsilon + \frac{\log (1/\delta)}{\epsilon}\right)
\end{equation} 
for proper learning.
These classes exhibit an asymptotic separation between the sample complexity of proper and improper PAC learning. An apparently folklore example of such concept classes arises from the \emph{Coupon Collector\/} problem~\cite{H19-proper-vs-improper}.

In the Coupon Collector problem, a classic problem from probability theory, we are given integers~$k,n$ with~$1 < k < n$, and have access to independent, uniformly distributed elements from a fixed, unknown size-$k$ subset~$S$ of~$[n]$. The task is to draw enough samples so as to ``collect'' (i.e., observe) all~$k$ ``coupons'' (elements) in~$S$. Viewed as a learning problem, the task is to determine the unknown set~$S$. It is well-known that~$k \ln k + \Theta(k)$ samples are necessary and sufficient to observe all~$k$ elements of~$S$ with constant probability of success (see, e.g.,~\cite[Theorem~5.13]{MU17-probability-computing}). 

Taking~$n \eqdef \ceil{1/\epsilon}$ and~$k \eqdef n - 1$ in the Coupon Collector problem, we get a concept class with VC-dimension~$1$, and hence the class is~$(1/n, 1/4)$-PAC learnable with~$\Theta(n)$ samples, albeit \emph{improperly\/}. On the other hand, \emph{proper\/} PAC learning the class with the same parameters requires~$\Theta(n \ln n)$ samples, as this entails correctly producing the size-$(n - 1)$ subset with probability at least~$3/4$. This gives us a separation of~$\Omega( (1/\epsilon) \log (1/\epsilon))$ versus~$\Order(1/\epsilon)$ for the sample complexity of proper versus improper learning. A generalization of this construction gives us a concept class with VC-dimension~$d$ which has sample complexity stated in Eq.~\eqref{eq-proper-cc-bd}~\cite{H19-proper-vs-improper}.

Motivated by the efficiency of proper versus improper learning with \emph{quantum\/} samples, Arunachalam, Belovs, Childs, Kothari, Rosmanis, and de Wolf~\cite{ABCKRW20-quantum-coupon-collector} studied a quantum analogue of the Coupon Collector problem. In the \emph{Quantum Coupon Collector\/} problem, the learner has access to quantum samples, i.e., uniform superpositions~$\ket{\psi_S}$ over the elements of an unknown size-$k$ subset~$S$ of~$[n]$. The task is to learn~$S$ with high probability. Arunachalam \etal\ discovered a curious phenomenon---the quantum sample complexity of the problem matches that of classical coupon collection for~$k$ up to roughly~$n/2$, but then \emph{decreases\/} until~$k$ reaches~$n$. More precisely, with~$m \eqdef n - k$, they proved that the sample complexity of the Quantum Coupon Collector problem with constant probability of error~$< 1$ is~$\Theta(k \log \min \set{k, m + 1})$. Bab Hadiashar, Nayak, and Sinha~\cite{BNS24-sample-complexity} proved a sharper lower bound of~$(1 - \order_k(1)) k \ln \min \set{k, m+1}$. More precisely, they established the following result.
\begin{theorem}[\cite{BNS24-sample-complexity}, Theorem~IV.17 and proof of Corollary~IV.18]
\label{thm-qcc-lb}
Let~$\delta\in (0,1/40]$ and~$c_0 \eqdef \tfrac{1}{2} \ln \big(\tfrac{1 - \delta}{32 \delta } \big)$. Any algorithm for the Quantum Coupon Collector problem with parameters~$n,k$, and error probability at most~$\delta$ has sample complexity
\begin{itemize}
\item at least~$k \ln m + c_0 n$ when~$1 \le m \le \delta n$ and~$m\ln m \le c_0 n/20$, and
\item at least~$k \ln k - k \ln\ln k - \Order(k)$ otherwise.
\end{itemize}
\end{theorem}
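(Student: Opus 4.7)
My plan is to use the Fano--Holevo framework to reduce the sample-complexity lower bound to a sharp upper bound on $\vnentropy(\rho_t)$, where $\rho_t \eqdef \expct_{\bm{S}}\!\bigl[\density{\psi_{\bm{S}}}^{\otimes t}\bigr]$ is the average state over a uniformly random size-$k$ subset $\bm{S}$ of $[n]$. Fano's inequality applied to the learner's output $\hat{\bm{S}}$ yields $I(\bm{S};\hat{\bm{S}}) \ge (1-\delta) \log \binom{n}{k} - \bentropy(\delta)$; combined with the data-processing inequality and Holevo's bound (using that each $\ket{\psi_{\bm{S}}}^{\otimes t}$ is pure), this forces $\vnentropy(\rho_t) \ge (1-\delta) \log \binom{n}{k} - \bentropy(\delta)$. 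So everything reduces to a sharp upper bound on $\vnentropy(\rho_t)$ in each of the two regimes.

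To bound this entropy I would exploit the $S_n$-symmetry of $\rho_t$: it is supported on $\operatorname{Sym}^t(\mathbb{C}^n)$ and commutes with the natural permutation action of $S_n$, so by Schur's lemma it is block-diagonal with respect to the isotypic decomposition of this action. The matrix elements
\[
\bra{\vec{i}}\rho_t\ket{\vec{j}} \speq \frac{1}{k^t}\cdot \frac{\binom{n-q}{k-q}}{\binom{n}{k}}\enspace,\qquad q \speqdef \bigl|\{i_1,\ldots,i_t\}\cup\{j_1,\ldots,j_t\}\bigr|\enspace,
\]
depend only on the combinatorial invariant $q$, which lets me extract the block eigenvalues in closed form via character-theoretic computations, with most of the entropy typically concentrated in a small number of low-weight blocks.

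In the first regime ($1 \le m \le \delta n$ and $m \ln m \le c_0 n/20$), I would combine the estimate $\log \binom{n}{k} = \log \binom{n}{m} \ge m \ln(n/m) \ge m \ln n - m \ln m$ on the Fano side with a matching upper bound on $\vnentropy(\rho_t)$ of the form $m \ln n - \Theta(tm/k) + \Order(m \ln m)$, which solved for $t$ yields $t \ge k \ln m + c_0 n$. The additive $c_0 n$ term arises from the multiplicative $(1-\delta)$-slack in Fano applied to the leading $m \ln n$ contribution to $\log \binom{n}{k}$, with the hypothesis $m \ln m \le c_0 n/20$ ensuring the lower-order $-m\ln m$ correction does not erode that gain. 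In the ``otherwise'' regime, where $m$ is comparable to or larger than $k$, the target is the nearly-classical bound $k \ln k - k \ln \ln k - \Order(k)$; here I would combine the cruder support-based bound $\vnentropy(\rho_t) \le \log \binom{n+t-1}{t}$ (from support in $\operatorname{Sym}^t(\mathbb{C}^n)$) with the Stirling estimate $\log \binom{n}{k} \ge k \ln(n/k) - \Order(k)$. Equating the two gives $t \ln(en/t) \gtrsim k \ln(n/k) - \Order(k)$, and a careful optimisation in the range $t \asymp k \ln k$ then yields the claimed lower bound.

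The principal obstacle I anticipate is extracting the precise additive constant $c_0 n$ in the first regime: a dimension bound alone gives only $\Order(m \log n)$, and one must carefully account for the entropy contribution from each $S_n$-isotypic block so as to convert the multiplicative $(1-\delta)$-slack in Fano into a genuine additive $\Omega(n)$ gap with the explicit constant $c_0 = \tfrac{1}{2}\ln\bigl(\tfrac{1-\delta}{32\delta}\bigr)$. A secondary obstacle is the sub-leading $-k \ln \ln k$ term in the ``otherwise'' regime, which requires somewhat more refined spectral estimates than a raw support bound; and one must also verify that the piecewise bound interpolates cleanly at the boundary $m \ln m \approx c_0 n / 20$ so that no intermediate regime is left uncovered.
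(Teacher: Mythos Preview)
This theorem is not proved in the present paper at all: it is quoted verbatim from Ref.~\cite{BNS24-sample-complexity} (as the theorem header itself indicates, ``Theorem~IV.17 and proof of Corollary~IV.18''). The paper under review only \emph{uses} this lower bound as a benchmark against which to compare its new algorithm (\Cref{thm-qcc-optimal}); it contains no argument for \Cref{thm-qcc-lb} to which your proposal could be compared.

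That said, your plan is broadly in line with the information-theoretic approach that Ref.~\cite{BNS24-sample-complexity} is described as taking (the present paper says as much in the introduction). The Fano--Holevo reduction to bounding~$\vnentropy(\rho_t)$ for the uniform mixture over size-$k$ subsets, together with the observation that~$\rho_t$ is $S_n$-invariant and hence block-diagonal in the isotypic decomposition of~$\mathrm{Sym}^t(\complex^n)$, is exactly the right skeleton. Your identification of the main obstacle---that a crude support bound gives only the correct order and one must track the spectrum block by block to recover the additive~$c_0 n$ with the stated constant---is also accurate. If you want to verify your approach in detail, you should consult Ref.~\cite{BNS24-sample-complexity} directly rather than the present paper.
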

The above bound has the exact optimal leading order term for~$k$ sufficiently smaller than~$n$ (listed as the second case in the theorem). Bab Hadiashar \etal\ conjectured that this sharp optimality of the bound extends to all~$k$ up to~$n$. 

First, we present an algorithm that confirms this conjecture. 
\begin{theorem}
\label{thm-qcc-optimal}
There is an algorithm for the Quantum Coupon Collector problem with parameters~$n,k$, where~$n \ge 1$ and~$1 < k < n$, that has probability of error at most~$\delta \in (0,1]$ and uses 
\begin{itemize}
\item at most~$k \ln m + k \ln \frac{\e}{\delta}$ samples when~$3m \ln( \e m) \le n \,$, and
\item at most~$k \ln k + k \ln \frac{1}{\delta} $ samples otherwise.
\end{itemize}
\end{theorem}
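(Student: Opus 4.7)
I would split the argument along the two cases of the theorem. When $3m\ln(\e m) > n$, the algorithm simply measures each copy of $\ket{\psi_S}$ in the computational basis, producing an i.i.d.\ uniform element of $S$; the standard classical coupon-collector union bound $k(1-1/k)^N \le k\e^{-N/k}$ then shows that $N = k\ln k + k\ln(1/\delta)$ samples suffice to observe every element of $S$ with probability at least $1-\delta$, matching the stated bound for this case.

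For the case $3m\ln(\e m) \le n$, the plan is to instead identify the missing set $\bar S$ of size $m$, by applying a POVM strongly biased toward $\bar S$-outcomes. Define $\ket{\tilde j} \eqdef \ket j - \tfrac{1}{\sqrt n}\ket{\psi_{[n]}}$ for $j \in [n]$, and consider the POVM $\{\ketbra{\psi_{[n]}}{\psi_{[n]}}\} \cup \{\ketbra{\tilde j}{\tilde j}\}_{j \in [n]}$. A short calculation shows $\sum_j \ketbra{\tilde j}{\tilde j} = \id - \ketbra{\psi_{[n]}}{\psi_{[n]}}$, so this is a valid POVM, and the outcome probabilities on input $\ket{\psi_S}$ are
\[
\Pr(0) \speq \frac{k}{n}, \quad \Pr(j) \speq \frac{k}{n^2} \text{ for } j \in \bar S, \quad \Pr(j) \speq \frac{m^2}{n^2 k} \text{ for } j \in S,
\]
so per element, $\bar S$-outcomes are $(k/m)^2$ times more likely than $S$-outcomes.

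The algorithm applies this POVM independently to each of $N = k\ln m + k\ln(\e/\delta) = k\ln(\e m/\delta)$ samples and records the counts $c_j$. For the analysis I would use the case hypothesis $m/n \le 1/(3\ln(\e m))$ to bound $(k/n)^2 \ge 1 - 2/(3\ln(\e m))$, which gives $\expct[c_j \mid j \in \bar S] = (k/n)^2 \ln(\e m/\delta) \ge \ln(m/\delta) - \Order(1)$. The Bernoulli bound $m(1 - k/n^2)^N \le m\e^{-\expct[c_j]}$ over $\bar S$ then yields $\Pr[\exists j \in \bar S : c_j = 0] \le \Order(\delta)$. The output is $\bar S' = \{j : c_j \ge T\}$ for an appropriate threshold $T$, and $S' = [n] \setminus \bar S'$.

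The main obstacle, and the delicate part of the analysis, is controlling the false positives from $S$. Each $j \in S$ has expected count $(m/n)^2 \ln(\e m/\delta)$, far smaller than that for $\bar S$; however, the choice of $T$ must be balanced across the entire range of $m$ within this case. For $m = \Order(1)$, threshold $T=1$ works via the simple union bound $\sum_{j \in S}\Pr[c_j \ge 1] \le N m^2/n^2$, which is at most $\Order(\delta)$. For $m$ near the upper end $\sim n/\ln n$, this bound becomes loose and one must take $T > 1$, invoking a Chernoff upper-tail bound on the Binomial and exploiting the sharp ratio $(k/m)^2 \ge 9\ln^2(\e m)$ supplied by the case hypothesis to force rapid decay on the $S$-side tail. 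Tuning $T$ (possibly as a function of $m$, $k$, $\delta$) so both tails fit within $\delta$ while the sample budget remains $k\ln m + k\ln(\e/\delta)$---a buffer of only $\Order(k)$ over the bare coupon-collector requirement $k\ln(m/\delta)$---is where most of the bookkeeping resides.
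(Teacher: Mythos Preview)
Your treatment of the case $3m\ln(\e m) > n$ matches the paper exactly. For the small-$m$ case, however, your approach is essentially the non-adaptive frequency-estimation scheme of Arunachalam \etal, and it cannot be tuned to meet the bound $k\ln m + k\ln(\e/\delta)$ stated in the theorem.

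Here is the obstruction. Take $m \approx \sqrt{n}$ (which satisfies $3m\ln(\e m) \le n$ for large $n$) and constant $\delta$. With $N = k\ln(\e m/\delta)$ samples, each $j \in \setcomp{S}$ has mean count $\mu_{\setcomp{S}} = (k/n)^2 \ln(\e m/\delta) = \ln(m/\delta) + \Order(1)$, while each $j \in S$ has mean $\mu_S = (m/n)^2 \ln(\e m/\delta) \approx \tfrac{\ln n}{2n}$. If you set $T = 1$, the expected number of false positives is $\sum_{j \in S} \Pr[c_j \ge 1] \approx k \mu_S \approx \tfrac{1}{2}\ln n$, which diverges. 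If instead you set $T \ge 2$, the false-negative probability per $j \in \setcomp{S}$ becomes at least $\Pr[c_j \le 1] \approx \e^{-\mu_{\setcomp{S}}}(1 + \mu_{\setcomp{S}})$, so the union bound over $\setcomp{S}$ gives roughly $m\,\e^{-\mu_{\setcomp{S}}}\mu_{\setcomp{S}} \approx \delta \cdot \Theta(\ln m)$, again diverging. Pushing $\mu_{\setcomp{S}}$ up enough to absorb that $\ln m$ factor costs an extra $\Theta(k\ln\ln m)$ samples, overshooting the claimed budget by more than the allowed $\Order(k)$. No choice of threshold rescues both tails simultaneously within $k\ln m + \Order(k)$ samples.

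The paper avoids this by abandoning frequency estimation altogether. Its algorithm is \emph{adaptive}: it maintains a running guess $G$ for $\setcomp{S}$ and, in each round, first measures $\ket{\psi_S}$ with the projector $\Pi_G$ (which detects and removes a ``rogue'' element of $G \cap S$ whenever one exists), and otherwise measures the residual against $\density{\psi_{[n]\setminus G}}$ to propose a new element for $G$. The analysis tracks $K_t \eqdef \size{G \cap S} + \size{\setcomp{S} \setminus G}$ and shows $\expct[K_{t+1} \mid K_t] \le K_t\big(1 - \tfrac{1}{k}(1 - \tfrac{3m}{n})\big)$, a clean geometric decay that yields $\expct[K_\ell] \le \delta$ at $\ell = k\ln m + k\ln(\e/\delta)$. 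The adaptivity---being able to \emph{undo} a false positive at cost one sample rather than having to prevent it statistically---is exactly what buys the sharp constant that your thresholding scheme cannot reach.
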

We may verify that the number of samples matches the lower bound in \Cref{thm-qcc-lb} exactly in the leading order term for all~$k$, for constant error~$\delta$. Moreover, the second order term is also tight up to a factor of~$2$, as a function of~$k$ and~$\delta$, in the first case listed in \Cref{thm-qcc-lb}.

When~$k$ is sufficiently small, i.e., in the ``large''~$m$ regime, there is a straightforward algorithm. We measure the quantum samples and to obtain classical ones, and follow the classical learning scheme. The non-trivial case is when~$m$ is ``small''. In the small~$m$ regime, the previous best algorithm due to Arunachalam \etal\ attempts to learn the complement~$\setcomp{S}$ rather than~$S$. The algorithm repeatedly measures the quantum samples~$\ket{\psi_S}$ according to the measurement~$(\density{\psi_{[n]}}, \id - \density{\psi_{[n]}})$, where~$\ket{\psi_{[n]}}$ is the uniform superposition over all elements of~$[n]$. The algorithm further measures the residual state in the computational basis when the second outcome is observed. As we may readily verify, the elements in~$\setcomp{S}$ are considerably more likely to be observed. Thus it suffices to estimate the frequency with which elements occur to determine which ones belong to~$\setcomp{S}$.

We begin with the basic idea underlying the above algorithm, and build on it. Instead of estimating frequencies, we take a more optimistic approach, in that we gamble on the element observed in the second measurement as being one of the more likely elements. I.e., we gamble on the element belonging to~$\setcomp{S}$. Of course, the gamble may fail, and we may incorrectly add an element to our guess for~$\setcomp{S}$. We devise a refined set of measurements to either identify and remove such ``rogue'' elements from our guess, or to discover other elements that are likely to be in~$\setcomp{S}$. Perhaps surprisingly, the behaviour of the resulting random process resembles that of the classical Coupon Collector process, and we show that it converges sufficiently rapidly to the correct set. 

Returning to the question of proper versus improper learning, consider the case~$k = n - 1$. The sample complexity of the Quantum Coupon Collector problem is~$\Theta(n)$. Thus, unlike in the classical case, the quantum version of the problem \emph{does not\/} separate the sample complexity of proper from that of improper quantum PAC learning. Arunachalam \etal\ concluded their work by raising the question of the asymptotic equality of the complexity of learning in the two modes~\cite[Section~5]{ABCKRW20-quantum-coupon-collector}.

The second result we present resolves this question.
\begin{theorem}
\label{thm-qpac-prop-vs-improp}
For each $d\geq 5$ and $\epsilon <1$, there are concept classes for which $\Omega\left(\frac d \epsilon \log \frac 1 \epsilon+\frac 1 \epsilon \log \frac 1 \delta \right)$ quantum samples are required for proper $(\epsilon,\delta)$-PAC learning, while $\Order \! \left(\frac d \epsilon +\frac 1 \epsilon \log \frac 1 \delta \right)$ quantum samples suffice for improper $(\epsilon,\delta)$-PAC learning.
\end{theorem}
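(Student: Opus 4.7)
The plan is to build a concept class by embedding a Quantum Padded Coupon Collector problem in a standard PAC learning template, generalising the classical construction of Hanneke~\cite{H19-proper-vs-improper}. For the $d = 1$ case, take $n \eqdef \ceil{1/\epsilon}$, $k \eqdef n - 1$, and let the concept class consist of the indicators $c_S$ of size-$k$ subsets $S$ of $[n]$, but lifted to a domain $[n] \times R$ for an auxiliary ``padding'' set $R$, with the label of $(i, r)$ depending only on~$i$. The distribution used to witness the lower bound is chosen so that each quantum sample takes the form of a \emph{padded} superposition over the labelled examples in the spirit of the Quantum Padded Coupon Collector. For $d > 1$, lift the general Hanneke concept class the same way, so that $d$ independent ``coordinates'' must be recovered.

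For the improper upper bound, the VC dimension of the lifted class matches that of the underlying Hanneke class, namely~$d$, so the general quantum PAC sample complexity from Eq.~\eqref{eq:PAC-sample-complexity_intro} immediately yields an improper learner using $\Order(d/\epsilon + (1/\epsilon) \log(1/\delta))$ quantum samples---indeed, measuring each quantum sample and running a classical improper learner already suffices. For the proper lower bound, observe that under the witness distribution any $c_{S'} \neq c_S$ incurs error exceeding $\epsilon$, so a proper learner must recover~$S$ exactly. This reduces the learning task to the Quantum Padded Coupon Collector problem with parameters $(n, k)$, and the lower bound $\Omega(k \ln k) = \Omega((1/\epsilon) \log(1/\epsilon))$ for the padded variant carries over to the lifted learning problem. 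The $d$-factor comes from the direct product across the $d$ independent coordinates, and the $(1/\epsilon) \log(1/\delta)$ tail from standard confidence amplification.

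The central technical obstacle is proving the Quantum Padded Coupon Collector lower bound itself, \emph{i.e.}, that padded quantum samples confer no asymptotic advantage over classical ones---in sharp contrast to the unpadded Quantum Coupon Collector, for which the algorithm of Arunachalam~\etal\ (and our refined \Cref{thm-qcc-optimal}) achieves $\Theta(k)$ samples rather than $\Theta(k \ln k)$ in the $k = n - 1$ regime. The delicate point is to show that \emph{every} POVM on $t$ copies of the padded mixed state $\rho_S$ distinguishes pairs of distinct sets $S, S'$ essentially no better than $t$ classical samples from the corresponding Coupon Collector distribution would. I expect the padding to be designed so that fresh per-sample randomness destroys coherences between different values of~$i$, placing $\rho_S$ close in trace distance to its dephased counterpart and rendering the small residual coherent component---precisely the resource exploited by the Arunachalam~\etal\ projection trick---sample-inefficient to harvest. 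Once this quantum-to-classical reduction is established, the classical Coupon Collector lower bound transfers via an information-theoretic or fidelity-based argument in the spirit of~\cite{BNS24-sample-complexity}.
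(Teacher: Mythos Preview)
Your high-level scaffolding---lift the coupon-collector concept class by a padding coordinate, reduce proper learning under a tailored distribution to the Quantum Padded Coupon Collector (QPCC) problem, and read off the improper upper bound from the VC dimension via Hanneke's learner---matches the paper. Two substantive points differ.

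First, the paper does \emph{not} obtain the factor~$d$ via a direct product of~$d$ independent coupon-collector instances. Instead it uses a single class~$\scC_{n,k,p}$ with~$n = k + d$ and~$k = d/(5\epsilon)$. A proper~$\epsilon$-learner for this class need not recover~$S$ exactly; it must produce~$S'$ with at most~$l \eqdef \lfloor \epsilon k\rfloor$ mismatches. The QPCC lower bound is therefore proved in the \emph{approximate} form (\Cref{QPCC_thm}): $\Omega\!\big(k\ln\frac{k}{l}\big)$ samples are needed even to get within~$l$ mismatches, and with the stated parameters this is already~$\Omega(\tfrac{d}{\epsilon}\log\tfrac{1}{\epsilon})$. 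This sidesteps the quantum direct-sum argument your plan would require, which is not automatic.

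Second, your proposed mechanism for the QPCC lower bound---that the averaged state~$\rho_S$ is close in trace distance to its computationally \emph{dephased} counterpart---is not what the paper does, and is not quite right as stated. After applying the Fourier transform on the padding registers, $\rho_S$ block-diagonalises according to a ``modular signature''~$\bb\in\naturals_p^n$; the paper then approximates~$\rho_S$ by a state~$\rho'_S$ in which each block~$\ket{\phi_\bb}$ is \emph{still a coherent superposition} but is \emph{independent of~$S$} (only the weight on each block depends on~$S$). So the residual coherences are not small---they are simply useless for learning~$S$. The information about~$S$ lives entirely in the distribution over~$\bb$, and~$\support(\bb)\subseteq\range(\bi)$ for the classical sample sequence~$\bi$, which is what ties the problem back to the classical Coupon Collector process. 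Your ``close to dephased'' picture would be a stronger and cleaner statement if true, but the paper neither proves nor needs it.
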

To establish this separation, we construct concept classes that are ``padded'' versions of those used to show the separation via the classical Coupon Collector problem. The \textit{Padded Coupon Collector\/} problem is obtained by modifying the standard Coupon Collector problem by appending arbitrary values from a large set to each element of the set $[n]$. Its quantum version, the \textit{Quantum Padded Coupon Collector\/} (QPCC) is defined analogously. The goal in both the cases is still to learn an unknown subset of~$[n]$ as in the standard coupon collector problem, or more generally, to learn a large part of the subset. (See \Cref{sec-preliminaries} for the precise definition of the problems.) In the classical case, the addition of padding has no real consequence. As the padding for any element does not carry any information about the set, an optimal learner always ignores it. In the quantum case, however, the situation becomes interesting. 

In contrast with the Quantum Coupon Collector problem, we show that the behaviour of the quantum samples for this new problem, with respect to learning, is very close to that in the classical case. The techniques we develop approximately preserve not only the classical sample complexity of learning the exact set, but also the for the case when we allow a small number~$l$ of \textit{mismatches}---elements that are in the hypothesis, but not in the underlying set.
\begin{restatable}{theorem}{thmqpccbound}
\label{QPCC_thm}
Consider the Quantum Padded Coupon Collector problem for size~$k$ subsets of~$[n]$ with at most~$l$ mismatches and pads ranging in~$\naturals_p \,$.
Any learner for this problem that has success probability at least~$1-\delta$ uses at least
\[t_0 \speqdef k \ln \frac {k+1}{10l+1} + k \ln (1-4\delta)\]
samples, provided~$ n \geq k+5l$, $ l \geq 1$, $p \geq  k^{t_0} / \delta^2 $, and~$\delta \in (0, 1/4)$.
\end{restatable}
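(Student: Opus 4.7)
The plan is to reduce the quantum problem to its classical analogue via a randomized-padding argument, and then apply a tight classical lower bound on the Padded Coupon Collector with at most $l$ mismatches.

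For the reduction, consider the ``hard'' distribution in which, for each sample $j$ and each $i\in S$, the padding $f_i^{(j)}$ is drawn independently and uniformly from $\naturals_p$. Averaging over the paddings, a single sample collapses to the mixed state
\[
\rho_S \;\eqdef\; \expct_{\vec f}\,\ketbra{\psi_S^{\vec f}}{\psi_S^{\vec f}}
\;=\; \frac{1}{k}\sum_{i\in S}\ketbra{i}{i}\otimes\frac{\id}{p}
\;+\; \frac{k\,\ketbra{\psi_S^{\mathrm{el}}}{\psi_S^{\mathrm{el}}}-\sum_{i\in S}\ketbra{i}{i}}{k}\otimes\frac{J}{p^{2}},
\]
where $\ket{\psi_S^{\mathrm{el}}}\eqdef\frac{1}{\sqrt k}\sum_{i\in S}\ket{i}$ and $J$ is the all-ones matrix on the $p$-dimensional padding register. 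The off-diagonal (coherence) term has trace norm $\Order(1/p)$, so writing $\sigma_S\eqdef \frac{1}{k}\sum_{i\in S}\ketbra{i}{i}\otimes \frac{\id}{p}$ for the purely classical analogue (a uniform $i\in S$ paired with a fresh uniform padding), telescoping yields $\|\rho_S^{\otimes t}-\sigma_S^{\otimes t}\|_1=\Order(t/p)$. Under the assumption $p\ge k^{t_0}/\delta^{2}$ this is much smaller than $\delta$ for every $t\le t_0$, so any quantum learner on $\rho_S^{\otimes t}$ with success probability $\ge 1-\delta$ translates into a classical learner on $t$ padded classical samples with success probability $\ge 1-2\delta$.

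Because the paddings in $\sigma_S$ are independent of $S$, this step reduces to a lower bound for classical Coupon Collector with at most $l$ mismatches. I would prove the classical bound by Yao's minimax: take $S$ uniformly at random among the size-$k$ subsets of $[n]$ and draw $t$ i.i.d.\ uniform samples from $S$. Let $T_t\subseteq S$ be the set of distinct elements observed. By linearity, $\expct[k-|T_t|]=k(1-1/k)^t$, and at $t=t_0 = k\ln\frac{k+1}{10l+1}+k\ln(1-4\delta)$ this evaluates to approximately $\frac{10l+1}{1-4\delta}$, which is strictly larger than $l$ for $\delta\in(0,1/4)$. Under $n\ge k+5l$, the learner's best strategy is to extend $T_t$ by picking $m\eqdef k-|T_t|$ additional elements from $[n]\setminus T_t$; a direct hypergeometric calculation shows that even the optimal (uniform) choice produces on average $\tfrac{5ml}{m+5l}$ mismatches, which exceeds $l$ as soon as $m \gtrsim 10l/4$. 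A concentration bound on $k-|T_t|$ (e.g., McDiarmid or a second-moment argument) combined with a hypergeometric tail then forces failure probability at least $2\delta$, which after passing through the quantum-to-classical reduction yields the stated quantum lower bound.

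The main obstacle is extracting the sharp constants---the factor of $10$ inside $10l+1$ and the additive $k\ln(1-4\delta)$ correction---in the classical step. This appears to require carefully trading off a tight one-sided tail bound on $k-|T_t|$ against the precise expected gain from random guessing in $[n]\setminus T_t$, which is controlled by the constraint $n\ge k+5l$. By contrast, the quantum-to-classical reduction itself is very clean in this parameter regime: the alphabet $\naturals_p$ is so enormous that any coherence advantage the learner might try to exploit through the padding register is completely washed out.
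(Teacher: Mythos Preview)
Your reduction has a fundamental gap: you treat the padding as \emph{fresh per sample}, so that after averaging the $t$-sample state factorizes as $\rho_S^{\otimes t}$. But in the Quantum Padded Coupon Collector problem the padding function $f$ is a fixed (unknown) parameter of the instance, and the learner receives $\ket{\psi_{S,f}}^{\otimes t}$ for that \emph{same} $f$ in every copy. Averaging over $f$ therefore gives
\[
\sigma_S \;=\; \expct_f\bigl[(\ketbra{\psi_{S,f}}{\psi_{S,f}})^{\otimes t}\bigr],
\]
which is \emph{not} $(\expct_f \ketbra{\psi_{S,f}}{\psi_{S,f}})^{\otimes t}$. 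The cross terms $\ketbra{\bi}{\bj}\otimes \expct_f\ketbra{f(\bi)}{f(\bj)}$ carry correlations whenever an index repeats in $\bi$ or $\bj$; for instance with $t=2$ and $\bi=(i,i)$, $\bj=(j,j)$, $i\neq j$, the padding part is $p^{-2}\sum_{x,y}\ketbra{x,x}{y,y}$, not $(J/p^2)^{\otimes 2}$. So your telescoping bound $\|\rho_S^{\otimes t}-\sigma_S^{\otimes t}\|_1=\Order(t/p)$ simply does not apply to the state the learner actually holds. A quick sanity check: your bound would only need $p\gtrsim t/\delta$, whereas the hypothesis demands $p\ge k^{t_0}/\delta^2$, exponential in $t_0$---this exponential dependence is exactly what the cross-sample correlations cost.

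The paper handles this by Fourier-transforming the padding registers and block-diagonalizing $\sigma_S$ according to the \emph{modular signature} $\bb\in\naturals_p^n$ of $(\bi,\bx)$. Each block is a rank-one projector $\ketbra{\phi_{S,\bb}}{\phi_{S,\bb}}$, and the key step is to approximate it by a state $\ket{\phi_\bb}$ that depends only on $\supp(\bb)$, not on $S$; the resulting approximation error is $\sqrt{k^t/p}$, which is where the exponential requirement on $p$ enters. After this reduction, observing $\bb$ reveals at most $|\supp(\bb)|$ elements of $S$, and one is back to the classical Coupon Collector tail you sketched. Your classical endgame (moment-generating-function tail on the number of distinct coupons, followed by a hypergeometric guessing bound using $n\ge k+5l$) is essentially what the paper does in its appendix, so that part is on the right track; the work is entirely in getting the quantum-to-classical step right for the correlated ensemble.
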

Note that~$t_0 \in \Omega(k\log \frac{k}{l} )$ for constant~$\delta$.
\Cref{QPCC_thm} helps us establish not only the $\Omega\left(\frac 1 \epsilon \log \frac 1 \epsilon  \right)$ proper PAC learning lower bound for $d \in \Theta(1)$ known for classical learning, but also the more general lower bound of $\Omega\left(\frac d \epsilon \log \frac 1 \epsilon  \right)$ stated in \Cref{thm-qpac-prop-vs-improp}.

To understand the role of padding, it is instructive to examine how sample-efficient learning algorithms for Quantum Coupon Collector work. The algorithm in Ref.~\cite{ABCKRW20-quantum-coupon-collector} and the new algorithm we present in Section~\ref{sec-qcc-algorithm} both heavily exploit the property that when~$k$ is ``close'' to~$n$ the quantum samples corresponding to different subsets are very close to the uniform superposition of all elements. This helps us convert quantum samples for the sets to those of their \emph{complements\/}, with high probability. Padding enables us to circumvent such algorithms. Since the padding of each element in $[n]$ is an unknown arbitrary value, the quantum samples corresponding to the different subsets with the padding are no longer close to a common pure state, immediately thwarting the known algorithms. Intuitively, due to the independence of the padding from the sets, if the number of possible ways of padding is very large, one may expect that any attempt by a learner to incorporate the register holding the padding in its operations will disturb the coherence of the quantum sample. On the other hand, not using the padding at all is equivalent to tracing it out, again reducing quantum samples to classical samples. 

The techniques underlying the proof of \Cref{QPCC_thm}, however, formalize this intuition only indirectly. They have their origins in the spectral analysis that lies at the heart the optimal lower bounds for PAC and agnostic quantum learning due to Bab Hadiashar, Nayak, and Sinha~\cite{BNS24-sample-complexity}. We consider the learning problem for sets and the padding chosen uniformly. Taking advantage of the symmetry introduced by the padding, we show that the quantum state corresponding to the random quantum samples may be viewed as a mixture of states that do not contain sufficient information about the set if the number of samples is not sufficiently large. In fact, we relate the information in the states to that obtained from the same number of classical samples. Known properties of the Coupon Collector process then lead to the theorem.

The techniques we develop for QPCC can be applied to padded quantum data in other contexts as well. So we hope that padding can more generally lift other forms of classical learning behaviour to the quantum setting. In particular, these ideas may help answer some questions in Quantum Learning Theory raised in the recent survey by Anshu and Arunachalam~\cite{AA24-learning-survey}.

On a more philosophical note, at first glance, the idea of padding might seem artificial. However, in the setting of classical learning, in many cases padding gives a more accurate picture of the kind of data that are available, and of the corresponding learning process. Often, for simplicity and/or computational efficiency, while trying to learn aspects of a population via sampling, we ignore parts of the data from each sample. As a simple example, consider estimating the average height of individuals in a population. For this task, most if not all sensible estimators ignore features such as weight and eye-color, even if such information is present in the samples. This extra unused information behaves like padding.

\paragraph{Organization of the paper.} In Section \ref{sec-preliminaries} we describe the notation for properties of sequences, as well as for notions in quantum information. This is followed by basics of Quantum Learning Theory and the PAC learning model in both the classical and quantum settings. We then describe the relevant variants of the Coupon Collector problem: the Quantum Coupon Collector, the Padded Coupon Collector, and the Quantum Padded Coupon Collector. At the end we state a tail bound relevant to the properties of the Classical Coupon Collector problem.

Section \ref{sec-coupon-collector} contains the relevant arguments for proving Theorem \ref{thm-qcc-optimal}; Section \ref{sec-qcc-algorithm} contains the details of the algorithm while Section \ref{sec-alg-analysis} establishes the proof of correctness.

Section \ref{sec-qpcc} is devoted to the Quantum Padded Coupon Collector problem, in order to prove the lower bound in \Cref{QPCC_thm}. In \Cref{sec-qpcc-ensemble}, we reduce the problem to finding a lower bound on the average error over a suitably chosen adversarial distribution, finding a diagonalized form for the ensemble average. In \Cref{sec-relation-classical}, we exhibit the emerging classical behaviour of this ensemble as the size of the padding increases, proving \Cref{QPCC_thm}. Finally, we explain how this result gives a separation between proper versus improper learning, i.e., we prove \Cref{thm-qpac-prop-vs-improp} in \Cref{sec-pac-reduction}.

In the appendix, in \Cref{sec-approx-error}, we bound the deviation from classical behaviour of the ensemble average in \Cref{sec-relation-classical}. \Cref{sec_clas_cc} presents known results about the Coupon Collector process for completeness; Section \ref{sec-relation-classical} calls for these results.

\paragraph{Acknowledgements.}
This research was supported in part by NSERC Canada. P.S.\ is also supported by a Mike and Ophelia Lazaridis Fellowship.

\section{Preliminaries}
\label{sec-preliminaries}

\paragraph{General notation.}

For simplicity we omit ceilings or floors when specifying some integral quantities such as the number of samples used by learning algorithms.

For a positive integer~$d$, we denote the set~$\{0,1,\ldots,d-1\}$ as~$\naturals_{d}$ and the set~$\{1,\ldots,d\}$ as~$[d]$. For a sequence~$\bb\in \naturals_p^n$, define~$\support(\bb)\eqdef\{i\in[n]:\bb_i\ne 0\}$, the set of coordinates of~$\bb$ with non-zero elements. For a sequence~$\bi \in [n]^t$, define~$\range(\bi) \eqdef \{ \bi_j : j \in [t] \} $, the set of elements of~$[n]$ occurring in the sequence~$\bi$. These definitions arise naturally by treating sequences as functions. For a function~$f$ with domain~$[n]$, and~$\bi \in [n]^t$, we denote the coordinate-wise application of~$f$ to~$\bi$ as~$f(\bi)$. The scalar product~$\bx \cdot \by$ for~$\bx, \by \in \naturals_p^t$ is defined as
\[
\bx \cdot \by \speqdef \sum_{i = 1}^t \bx_i \by_i \pmod{p} \enspace.
\]

\paragraph{Quantum Information notation.}

For a thorough introduction to the basics of quantum information, we refer the reader to the book by Watrous~\cite{W18-TQI}. We briefly review the notation that we use in this article. 

We consider only finite dimensional Hilbert spaces in this work, and denote them either by capital script letters like~$\cA$ and $\cB$, or directly as~$\complex^m$ for a positive integer~$m$.
A \emph{register\/} is a physical quantum system, and we denote it by a capital letter, like~$A$ or~$B$. A quantum register~$A$ is associated with a Hilbert space~$\cA$, and the state of the register is specified by a unit-trace positive semi-definite operator on~$\cA$. The state is called a \emph{quantum state\/}, or also as a \emph{density operator\/}. We denote quantum states by lower case Greek letters, like~$\rho$ and~$\sigma$.  We use notation such as~$\rho^A$ to indicate that register~$A$ is in state~$\rho$, and may omit the superscript when the register is clear from the context. We use \emph{ket\/} and \emph{bra\/} notation to denote unit vectors and their adjoints in a Hilbert space, respectively. A quantum state is \emph{pure\/} if it has rank one, i.e.,~$\rho = \density{\phi}$ where~$\ket{\phi}$ is a unit vector. For convenience we sometimes also use the ket notation for unormalised vectors, with an explicit mention.
For a linear operator~$M$ on some Hilbert space, $\trnorm{M} \eqdef \trace \sqrt{M^\adjoint M}$ denotes the trace norm of~$M$ (also called the~$\ell_1$ or Schatten~$1$ norm).

For a set~$X \subseteq [n]$, we define~$\ket{\psi_X} \in \complex^n$ as the uniform superposition over elements of the set~$X$, and~$\Pi_X$ as the orthogonal projection onto the subspace spanned by the elements of~$X$ in the same space. I.e., 
\begin{align*}
\ket{\psi_X} & \speqdef \frac{1}{\sqrt{ \size{X}}} \sum_{i\in X} \ket{i} \enspace, \qquad \text{and} \\
\Pi_X & \speqdef \sum_{i\in X} \density{i} \enspace. 
\end{align*}

\paragraph{Quantum learning theory.}

We refer the readers to the book~\cite{SB14-ML} for an introduction to machine learning theory and the survey~\cite{AdW17-survey} for an introduction to  quantum learning theory. Here,  we briefly review the notation related to the PAC model that we study in this article.

For some finite, non-empty domain~$X$, we refer to a Boolean function~$c: X \rightarrow \set{0,1}$ as a \emph{concept}. 
We may also think of a concept as a bit-string in~$\set{0,1}^n$ for~$n \eqdef \size{X}$ which lists the value~$c$ assigns to each element of~$X$. A \emph{concept class\/} is a subset~$\scC \subseteq \set{0,1}^X$ of Boolean functions. For a concept~$c$, we refer to~$c(x)$ as the \emph{label\/} of~$x \in X$, and the tuple~$(x,c(x))$ as 
a \emph{labeled example\/}. We say a concept class~$\scC$ is \emph{non-trivial\/} if it contains two distinct concepts~$c_1,c_2$ such that for some~$x_1, x_2 \in X$, we have~$c_1(x_1) = c_2(x_1)$ and~$c_1(x_2) \ne c_2(x_2)$. 

A crucial combinatorial quantity in learning Boolean functions is the \emph{VC dimension\/} of a concept class, introduced by Vapnik and Chervonenkis~\cite{VC71-VC-dim}. We say a set~$S \eqdef \set{s_1,\ldots,s_d} \subseteq X$ is \emph{shattered\/} by a concept class~$\scC$ if for every~$a \in \set{0,1}^{d}$, there exists a concept~$c \in \scC$ such that~$(c(s_1),\ldots,c(s_d)) = a$. The \emph{VC dimension\/} of~$\scC$, denoted as~$\VC(\scC)$, is the size of the largest set shattered by~$\scC$.  

\paragraph{PAC model.}

Consider a concept class~$\scC \subseteq \set{0,1}^X$. The PAC (\emph{probably approximately correct\/}) model for learning concepts was introduced---in the classical setting---by Valiant~\cite{val84-PAC}, and was extended to the quantum setting by Bshouty and Jackson~\cite{BJ99-DNF}. In the quantum PAC model, a learning algorithm is given a~\emph{quantum PAC example oracle\/}~$\QPEX(c,D)$ for an unknown concept~$c \in \scC$ and an unknown distribution~$D$ over~$X$. The oracle~$\QPEX(c,D)$ does not have any inputs. When invoked, it outputs a superposition of labeled examples of~$c$ with amplitudes given by the distribution~$D$, namely, the pure state
\begin{equation*}
    \sum_{x \in \set{0,1}^n} \sqrt{D(x)} \, \ket{x,c(x)} \enspace,
\end{equation*}
which we call a \emph{quantum sample\/}, or simply a \emph{sample\/}. Note that measuring a quantum sample in the computational basis gives us a labeled example distributed according to~$D$, i.e., a classical sample.
We say a Boolean function~$h$, commonly called a \emph{hypothesis\/}, is an~$\epsilon$-\emph{approximation\/} of~$c$ (or has error~$\epsilon$) with respect to distribution~$D$, if 
\begin{equation}
     \Pr_{x \sim D} \left[ h(x) \neq c(x) \right] \quad \leq \quad \epsilon   \enspace.
\end{equation}
Given access to the oracle~$\QPEX(c,D)$, the goal of a quantum PAC learner is to find a hypothesis~$h$ that is an~$\epsilon$-approximation of~$c$ with sufficiently high success probability. 
\begin{definition}
For~$\epsilon, \delta \in [0,1]$, we say that an algorithm~$\cA$ is an~$(\epsilon,\delta)$-\emph{PAC quantum learner\/} for the concept class~$\scC$ if for every~$c\in\scC$ and distribution~$D$, given access to~$\QPEX(c,D)$, with probability at least~$1-\delta$, the algorithm~$\cA$ outputs a hypothesis~$h\in \set{0,1}^X$ which is an~$\epsilon$-approximation of~$c$. We say that~$\cA$ is a \emph{proper\/} learner if it always outputs a hypothesis~$h \in \scC$, and say that it is \emph{improper\/} otherwise.
\end{definition}
The \emph{sample complexity\/} of a quantum learner~$\cA$ is the maximum number of times~$\cA$ invokes the oracle~$\QPEX(c,D)$ for any concept~$c \in \scC$ and any distribution~$D$ over~$X$. The~$(\epsilon,\delta)$-\emph{PAC quantum sample complexity\/} of a concept class~$\scC$ is the minimum sample complexity of a~$(\epsilon,\delta)$-PAC quantum learner for~$\scC$. Since we can readily derive classical samples from quantum ones, quantum learning algorithms are at least as efficient as classical ones in terms of sample complexity, as well as other measures such as time and space complexity.

\paragraph{Quantum Coupon Collector.}

Let~$n$ be an integer~$\ge 3$. For a positive integer~$k \in (1,n)$, let~$S$ be a~$k$-element subset of~$[n]$, and let~$\ket{\psi_S}$ denote the uniform superposition over the elements of~$S$:
\begin{equation*}
\label{eq-qsample}
 \ket{\psi_S} \quad \eqdef \quad \frac{1}{\sqrt k} \sum_{i \in S} \ket{i} \enspace.   
\end{equation*}
This is a quantum analogue of a uniformly random sample from~$S$, and we call this a \emph{quantum sample\/} of~$S$. For ease of notation, we define~$m \eqdef n - k$. 

In the Quantum Coupon Collector problem, we are given~$n,k$ and quantum samples of an arbitrary but fixed, unknown~$k$-element subset~$S$, and we would like to learn the subset using as few samples as possible. By ``learning the subset'', we mean that we would like to determine, with probability at least~$1 - \delta$ for some parameter~$\delta \in [0, 1)$, all the~$k$ elements of the set~$S$. We are interested in the quantum sample complexity of the problem, i.e., the least number of samples required by a quantum algorithm to learn the set with probability of error at most~$\delta$. Observe that by permuting the elements of~$[n]$ by a uniformly random permutation, we can show that the optimal worst-case error equals the optimal average-case error under the uniform distribution over the sets.

We may view the (Quantum) Coupon Collector problem as a problem in the PAC learning model as follows. The concept class~$\scC_{k,n}$ corresponding to the problem with parameters~$k,n$ is the set of characteristic vectors of all size~$k$ subsets of~$[n]$, i.e.,~$\scC_{k,n} \eqdef \set{c \in \set{0,1}^n : \size{c} = k}$. Labeled examples for a concept~$c \in \scC_{k,n}$ are always drawn from the uniform distribution over~$i \in [n]$ such that~$c(i) = 1$. Hence the label in such examples is superfluous, and the quantum sample considered above is equivalent to a sample in the quantum PAC model. Learning the unknown subset corresponds to \emph{properly\/} learning the concept with approximation error less than~$1/k$.

\paragraph{Padded Coupon Collector.}

The \emph{Padded Coupon Collector\/} problem is a variant of the Coupon Collector problem, defined as follows. Let~$n,k,p$ be positive integers with~$1 < k < n$, and~$l$ a non-negative integer. For an unknown set $S\subseteq [n]$ of size $k$ and unknown function~$f:[n]\to \N_p \,$, we are given independent random samples $(i,f(i))$ for $i$ chosen uniformly at random from $S$. Given~$\delta \in [0,1)$, the goal is to output, with probability at least~$1 - \delta$, a size~$k$ subset $S'$ such that $|S'\setminus S|\leq l$, using as few samples as possible.

We call~$f(i)$ as the $\textit{padding\/}$ of $i$, $p$ as the \textit{padding length\/}, and~$l$ as the number of \emph{mismatches\/}. In the classical case, this problem is as hard as the analogous Coupon Collector problem (in which we allow $l$ mismatches), as we can convert instances of either problem to those of the other, with the same parameters (i.e.,~$k,n,l$) by adding or removing the padding. This holds for any value of $p$.

The quantum analogue, the \textit{Quantum Padded Coupon Collector\/} problem, with parameters~$n,k,p,l$ as before is defined as follows. For an unknown set $S\subseteq [n]$ of size $k$ and unknown function~$f:[n]\to \N_p \,$, we are given copies of \emph{quantum samples\/} $\ket{\psi_{S,f}}$, where
\[
\ket{\psi_{S,f}} \speqdef \frac {1}{\sqrt{k}} \sum_{i\in S} \ket{i} \ket{f(i)} \enspace.
\]
Given~$\delta \in [0,1)$, the goal is again to output, with probability at least~$1 - \delta$, a size~$k$ subset $S'$ with at most~$l$ mismatches, i.e., with~$|S'\setminus S|\leq l$, using as few samples as possible.

As before, quantum samples reduce to random samples as in the classical problem, when measured in the computational basis. Unlike the classical case, it is not possible to convert individual quantum samples to those of the unpadded variant, i.e., the Quantum Coupon Collector problem, when~$p > 1$. Applying the unitary transform $\ket i \ket{f(i)} \to \ket i \ket 0$ requires knowledge of the entire function~$f$.

\paragraph{Tail bounds.}

To analyse the classical Coupon Collector process in Section \ref{sec_clas_cc}, we need the Hoeffding Bound for Hypergeometric series \cite{Hoeffding63-bounds}. If out of $n$ balls, $\alpha n$ balls are red and $(1-\alpha)n$ balls are blue for some~$\alpha \in (0,1)$, and $X$ is the random variable corresponding to the number of red balls picked when picking $t$ out of $n$ balls without replacement, then we have the following concentration bounds.
\begin{equation}\label{bound_hyp_geo}
    \Pr[X-\alpha t\geq \lambda]\leq \exp\left(-\frac {2\lambda^2}{t}\right)\quad ,\quad  \Pr[X-\alpha t\leq -\lambda]\leq \exp\left(-\frac {2\lambda^2}{t}\right).
\end{equation}

\section{Quantum Coupon Collector}
\label{sec-coupon-collector}

In this section, we present a more efficient learning algorithm for the Quantum Coupon Collector problem. The number of samples used by the algorithm matches the recently established sharp lower bound~\cite{BNS24-sample-complexity} exactly in the leading order term, for constant probability of error.

\subsection{The algorithm}
\label{sec-qcc-algorithm}

When~$3m \ln( \e m) > n$, we reduce the Quantum Coupon Collector problem to its classical version. I.e., we measure~$k\ln k +\Theta(k)$ quantum samples~$\ket{\psi_S}$ in the standard basis and output the set of all coupons observed. So, we need only focus on the ``small''~$m$ regime, i.e., when~$3m \ln( \e m) \le n \,$.

As in Ref.~\cite{ABCKRW20-quantum-coupon-collector}, we try to learn the complement~$\setcomp{S}$ of the set~$S$ when~$m$ is small, but we do this differently. We maintain a guess~$G$ for the set~$\setcomp{S}$, and use the quantum samples to improve our guess via suitable measurements. The measurements of a quantum sample may reveal that an element~$x$ has been misclassified  as being in~$\setcomp{S}$, in which case we remove it from~$G$. Alternatively, we may discover a new element~$x$ that is likely to be in~$\setcomp{S}$. In this case, we add~$x$ to~$G$. Sometimes, the measurements do not reveal either type of element, and we move to the next sample. We show that after sufficiently many such iterations, the guess~$G$ equals~$\setcomp{S}$ with constant probability close to~$1$. 

More formally, the algorithm keeps track of the following subsets of~$[n]$:
\begin{enumerate}

\item $G$: This is the current guess for the complement~$\setcomp{S}$.

\item $U$: This is a set that the algorithm currently believes contains~$S$. 
\end{enumerate}
Initially, $G$ is empty. Ideally the algorithm keeps adding elements from the complement~$\setcomp{S}$ to this set~$G$ until all the desired elements have been collected. The set~$U$ is initially set to~$[n]$, and is updated to~$[n] \setminus G$, whenever~$G$ is updated. We refer to elements~$i \not\in \setcomp{S}$ that have been incorrectly added to~$G$ as \emph{rogue\/} coupons. The set of rogue coupons at any time is precisely~$G \intersect S$.

Recall the notation~$\ket{\psi_X}$ and~$\Pi_X$ for the uniform superposition over elements of the set~$X \subseteq [n]$ and the orthogonal projection onto the subspace spanned by the elements of~$X$, respectively. Let~$M_0 \eqdef \Pi_G$ and~$M_1 \eqdef \id - M_0 \,$. The algorithm first measures a quantum sample~$\ket{\psi_S}$ according to~$(M_0, M_1)$. If the outcome is~$0$, the guess~$G$ necessarily has rogue coupons. We measure the residual state, viz.~$\ket{\psi_{G \intersect S}}$, to identify a rogue coupon and remove it from~$G$. If the outcome is~$1$, the algorithm further measures the residual state~$\ket{\psi_{U \intersect S}}$ according to~$(E_0, E_1)$, where~$E_0 \eqdef \density{\psi_U}$ and~$E_1 \eqdef \id - E_0$. If the outcome is~$1$, as we show later, the elements of~$\setcomp{S} \intersect U$ occur with higher (positive) amplitude in the resulting residual state. We therefore measure the residual state in the computational basis, and add the outcome to~$G$. Algorithm~\ref{alg-qcc} summarises all the steps of the algorithm.

\begin{algorithm}[ht]
\caption{Quantum-Coupon-Collector \label{alg-qcc}}

\SetKwInOut{Input}{Input}
\SetKwInOut{Output}{Output~}

\Input{positive integers~$n, k$ such that~$1 < k < n$, error parameter~$ \delta \in (0,1)$, and quantum samples~$\ket{\psi_S}$ for an unknown size~$k$ subset~$S \subset [n]$}
\Output{subset~$T \subset [n]$}

\BlankLine
$m \leftarrow n - k$ \;
\lIf{$3m \ln( \e m) > n$}{$\ell \leftarrow k \ln k + k \ln \tfrac{1}{\delta}$}
\lElse{$\ell \leftarrow k \ln m + k \ln \tfrac{\e}{\delta}$} 
Obtain~$\ell$ copies of the quantum sample, viz.,~$\ket{\psi_S}^{\tensor \ell}$ \;
\uIf{$3m \ln( \e m) > n$}{
  Measure each of the~$\ell$ quantum samples~$\ket{\psi_S}$ in the computational basis \;
  $T \leftarrow$ the set of all the outcomes observed \;
}
\uElse(\hspace{1in} \texttt{/* $3m \ln( \e m) \le n$ */}){ 
  $G \leftarrow \emptyset$, $U \leftarrow [n]$ \;
  \For{$t = 1$ to~$\ell$}{
    $M_0 \leftarrow \Pi_G \,$, $M_1 \leftarrow \id - M_0$ \; \label{step-1st-measurement}
    Measure the~$t$-th quantum sample~$\ket{\psi_S}$ according to~$(M_0, M_1)$ to get an outcome~$a \in \set{0,1}$ and corresponding residual state~$\ket{\xi_a}$ \; \label{step-1st-outcome}
    \uIf{outcome~$a = 0$}{
      Measure the residual state~$\ket{\xi_0}$ in the computational basis to get some outcome~$x$ \; \label{step-measure-rogue}
      $G \leftarrow G \setminus \set{x}$, $U \leftarrow U \union \set{x}$ \; \label{step-remove-elt}
    }
    \uElse(\hspace{1in} \texttt{/* outcome~$a = 1$ */}){ \label{step-1st-outcome-1}
      $E_0 \leftarrow \density{\psi_U}$, $E_1 \leftarrow \id - E_0$ \;
      Measure the residual state~$\ket{\xi_1}$ according to~$(E_0, E_1)$ to get an outcome~$b \in \set{0,1}$ and corresponding residual state~$\ket{\phi_b}$ \; \label{step-measure-u}
      \uIf{outcome~$b = 1$}{ 
        Measure the residual state~$\ket{\phi_1}$ in the computational basis to get some outcome~$x$ \; \label{step-measure-elt}
        $G \leftarrow G \union \set{x}$, $U \leftarrow U \setminus \set{x}$ \; \label{step-add-elt}
      }
    }
  }
  $T \leftarrow G$ \;
}
\Return{$T$}
\end{algorithm}

With ``high'' probability, in step~\ref{step-add-elt}, we keep adding elements of~$\setcomp{S}$ to~$G$. Due to our choice of~$U$ as the complement of~$G$ (and therefore a set disjoint from~$G$), the element added in this step is distinct from all the elements currently in~$G$. Steps~\ref{step-1st-measurement} to~\ref{step-remove-elt} are designed to correct our guess~$G$, since it is possible that in step~\ref{step-add-elt}, we mistakenly add elements from~$S$ to~$G$. By measuring~$\ket{\psi_S}$ with~$\ketbra{\psi_G}{\psi_G}$, we check if any rogue coupons were previously collected in~$G$. If rogue coupons are detected, we remove them from~$G$. If no rogue coupons were collected, or if all the ones that were collected have been removed, then steps~\ref{step-measure-rogue} and~\ref{step-remove-elt} are not executed. We claim that at the end of this procedure, $G$ is, with all but a small probability of failure, the set~$\setcomp{S}$. We prove the correctness of the algorithm in the next section.

\subsection{The analysis}
\label{sec-alg-analysis}

The correctness of Algorithm~\ref{alg-qcc} when~$3m \ln( \e m) > n$ follows from well-known properties of the classical Coupon Collector process. A straightforward calculation shows that the expected number of coupons that remain to be collected after~$t$ samples have been obtained is at most~$k( 1 - \tfrac{1}{k})^t$, which is bounded above by~$\delta$ when~$t = k \ln k + k \ln \tfrac{1}{\delta} \,$. This implies that the probability that the algorithm \emph{does not\/} collect all the~$k$ coupons in~$S$ with~$k \ln k + k \ln \tfrac{1}{\delta}$ samples is at most~$\delta$.

So we need only analyse Algorithm~\ref{alg-qcc} when~$3m \ln( \e m) \le n$. We do so by studying how far the guess~$G$ is from~$\setcomp{S}$, i.e., the size of the symmetric difference of the two sets. We show that in expectation, this distance is strictly decreasing, at a sufficiently high rate. To implement this approach, we keep track of two more sets.
\begin{enumerate}
\item $R$: the set of rogue coupons. This is the set $S\cap G$.
\item $C$: the set of coupons from $\setcomp{S}$ that remain to be collected. This is the set $\setcomp{S} \setminus G$
\end{enumerate}
We denote the cardinality of $R$ and $C$ by $J_t$ and $L_t$, respectively, after iteration~$t$. So~$J_0 = 0$ and~$L_0 = m$. Note that~$(J_t : t \ge 0)$ and~$(L_t : t \ge 0)$ are random walks over non-negative integers, with~$J_t \in [0,k]$ and~$L_t \in [0,m]$. 

The number of rogue coupons~$J_t$ may decrease in step~\ref{step-remove-elt}, or increase by~$1$ in step~\ref{step-add-elt}, or stay the same, in one iteration. The number of coupons~$L_t$ in~$C$ may stay the same, or decrease by~$1$ in step~\ref{step-add-elt}, in one iteration. Finally, at most one of~$J_t$ or~$L_t$ changes in one iteration. The algorithm succeeds if and only if there are neither any rogue coupons in~$G$ nor any coupons from~$\setcomp{S}$ that remain to be collected when it terminates, i.e.,~$J_\ell = L_\ell = 0$, where~$\ell$ is the total number of iterations. This is equivalent to the sum~$J_\ell + L_\ell$ being~$0$. So we track the random walk~$K_t \eqdef J_t + L_t$ instead.

Conditioned on~$J_t = j$ and~$L_t = l$ for some~$j \ge 0$ and~$l \in [0,m]$, we compute some states and the probability of some events occurring in the~$(t+1)$-th iteration. The probability of obtaining outcome~$0$ in step~\ref{step-1st-outcome}, and thus reducing~$J_t$ by~$1$, is 
\[
\norm{ \Pi_G \ket{\psi_S} }^2 \speq \sum_{i\in G\cap S} \frac{1}{k} \speq \frac{j}{k}\enspace.
\]
The probability of executing step~\ref{step-1st-outcome-1} is therefore~$1 - \frac{j}{k}$. Conditioned on executing step~\ref{step-1st-outcome-1}, i.e., getting outcome~$1$ in step~\ref{step-1st-outcome}, the residual state is~$\ket{\psi_{S\cap U}}$. So the conditional probability of getting outcome~$1$ in step~\ref{step-measure-u} is
\[
1 - \braket{\psi_U}{\psi_{S\cap U}}^2 \speq 1 - \frac{ |S\cap U| }{ |U|} 
    \speq 1-\frac {k-j}{k-j+l}
    \speq \frac{l}{k-j+l} \enspace,
\]
as~$S \cap U = S \setminus R$, and~$U = (S \cap U) \cup C$. The unnormalised resultant state on getting outcome~$1$ is
\begin{align*}
( \id - \ketbra{\psi_U}{\psi_U}) \ket{\psi_{S\cap U}} 
  & \speq \ket{\psi_{S\cap U}} - \braket{\psi_U}{\psi_{S\cap U}} \ket{\psi_U} \\
  & \speq \ket{\psi_{S\cap U}} - \sqrt{\frac{k-j}{k-j+l}} \; \ket{\psi_U} \\
  & \speq \ket{\psi_{S\cap U}} - \frac {k-j}{k-j+l} \; \ket {\psi_{S\cap U}} - \frac {\sqrt {(k-j)l}}{k-j+l} \; \ket{\psi_{C}} \\
  & \speq \frac{l}{k-j+l} \; \ket {\psi_{S\cap U}} - \frac {\sqrt {(k-j)l}}{k-j+l} \;\ket{\psi_{C}} \enspace.
\end{align*}
So the normalised state~$\ket{\phi_1}$ is
\[
\ket{\phi_1} \speq
    \sqrt{\frac {l}{k-j+l}} \; \ket{\psi_{S\cap U}} - \sqrt{\frac {k-j}{k-j+l}} \;  \ket{\psi_{C}} \enspace.
\]
From the above expression for~$\ket{\phi_1}$, it is immediate that the probability of adding a new rogue coupon to~$G$ in step~\ref{step-add-elt}, conditioned on getting this state, is~$\frac{l}{k-j+l} \,$. Similarly, the conditional probability for adding a fresh coupon from $\setcomp{S}$ to~$G$ is~$\frac {k-j}{k-j+l} \,$. The corresponding unconditional probabilities in iteration~$t+1$ (for~$J_t$ increasing or~$L_t$ decreasing by~$1$) are 
\[
\frac{k-j}{k} \times \frac {l}{k-j+l} \times \frac l{k-j+l} 
  \qquad \text{and} \qquad \frac {k-j}k \times \frac {l}{k-j+l}\times \frac {k-j}{k-j+l} \enspace,
\]
respectively.

Using these results, we can compute the transition probabilities for the random walk~$(K_t)$ in the~$(t+1)$-th iteration. We have~$K_{t+1} = K_t + 1$ when the number of rogue coupons~$J_t$ increases by~$1$. This happens with probability 
\[
\frac {(k-j)l^2}{k(k-j+l)^2} \enspace.
\]
We have~$K_{t+1} = K_t - 1$ when either the number of rogue coupons~$J_t$ decreases, or when we collect a new coupon from~$\setcomp{S}$, i.e., $L_t$ decreases. This happens with probability
\[
\frac{j}{k} +\frac{(k-j)^2 l}{k(k-j+l)^2} \enspace.
\]
Thus, defining~$r \eqdef j + l$, we get    
\begin{align*}
\expct \left[ \left. K_{t+1} \right| J_t = j, \, L_t = l \right]
  & \speq r + \frac{(k-j)l^2}{k(k-j+l)^2} - \frac{j}{k} - \frac{(k-j)^2l}{k(k-j+l)^2} \\
  & \speq r - \frac {(k-j)(k-j-l)l}{k(k-j+l)^2} - \frac{r-l}{k} \\
  & \speq r \left( 1-\frac{1}{k} \right) - \frac{l}{k} \left(\frac {(k-j)(k-j-l)}{(k-j+l)^2}- 1 \right) \\
  & \speq r \left( 1-\frac{1}{k} \right) + \frac{l^2}{k} \left( \frac{3(k-j)+l}{(k-j+l)^2} \right) \\ 
  & \sple r \left( 1-\frac{1}{k} \right) + \frac{3 l^2}{(k-j+l)k} \\
  & \speq r \left( 1-\frac{1}{k} \right) + \frac{3 r l^2}{k (k-j+l)(j+l)} \enspace.
\end{align*}
We bound the second term from above as follows. For~$l = 0$, it is bounded by~$0$. Otherwise, as a function of~$j \in [0,k]$, the expression~$(k-j+l)(j+l)$ is minimised at~$j \in \set{0,k}$, and the minimum is~$(k+l)l$. Moreover, as a function of~$l$ the expression~$l/(k + l)$ is maximised at~$l = m$. So
\[
\frac{l^2}{(k-j+l)(j+l)} \sple \frac{l^2}{l(k+l)} \sple \frac{m}{n} \enspace.
\]
Using this, we get
\begin{align*}
\expct \left[ \left. K_{t+1} \right| J_t = j_, \; L_t = l \right]
  & \sple r \left( 1 - \frac{1}{k} \right) + \frac{r}{k} \times \frac{3m}{n} \\
  & \speq r \left(1- \frac{1}{k} \left(1-\frac{3m}{n} \right)\right) \enspace.
\end{align*}
Since~$K_t = r = j + l$ when~$J_t = j, L_t = l$, we have
\begin{align}
\nonumber
\expct[K_{t+1}]
  & \sple \expct[K_t] \left(1 - \frac{1}{k} \left(1-\frac{3m}{n} \right) \right) \\
\nonumber
  & \sple \expct[K_0] \left(1-\frac{1}{k} \left(1-\frac{3m}{n} \right) \right)^{t + 1} \\
\label{eq-expct-bound}
  & \speq m \left(1-\frac{1}{k} \left(1-\frac{3m}{n} \right) \right)^{t + 1} \enspace. 
\end{align}
Note that the behaviour of the random variable~$K_t$ is very similar to that of the number of coupons that remain to be collected in the classical Coupon Collector process. 

Let~$\ell \eqdef k\ln m +ck$ for a positive parameter~$c$ to be specified later. As~$3m \ln(\e m) \le n \,$, by Eq.~\eqref{eq-expct-bound},
\begin{align*}
\expct[K_{\ell}] & \sple m \left(1-\frac{1}{k} \left(1-\frac{3m}{n} \right) \right)^{k\ln m +ck} \\
            & \sple \exp\left(\frac{3m\ln m - c(n - 3m)}{n} \right) & (\text{using } 1 + z \le \e^{z}) \\
            & \sple \exp( 1 - c ) \enspace.
\end{align*}
Thus, the probability that~$K_\ell$ is not~$0$ is at most~$\exp(1 - c)$. Taking~$c \eqdef \ln \tfrac{\e}{\delta}$, we get a probability of failure of at most~$\delta$. This proves \Cref{thm-qcc-optimal}.

\section{Quantum Padded Coupon Collector}
\label{sec-qpcc}

In this section, we establish strong bounds on the sample complexity of the Quantum Padded Coupon Collector problem, thereby proving \Cref{QPCC_thm}. Observe that at the heart of the problem we have a question about learning a subset from an ensemble of quantum states. In \Cref{sec-qpcc-ensemble}, we analyse this ensemble to bring it into diagonal form---a form that is more amenable to analysis. We then relate the problem to the \emph{classical\/} Coupon Collector process via an approximation of the ensemble in \Cref{sec-relation-classical}) to conclude the desired bound. Finally, in \Cref{sec-pac-reduction}, we show how this bound gives us a strong lower bound for proper PAC learning.  

\subsection{Simplified ensemble}
\label{sec-qpcc-ensemble}

Assume we have an algorithm for the problem with parameters~$n,k,l,p$ that uses~$t$ samples. We consider the average-case success probability of the algorithm when the underlying set-function pair~$S,f$ are chosen uniformly at random from their respective domains. We show that this probability is suitably small, if~$t$ is bounded as in \Cref{QPCC_thm}. For ease of reference, we call the version of the Quantum Padded Coupon Collector problem with uniformly random~$S,f$, with access to a total of~$t$ samples, as~$\cT_1$. 

The goal of the task~$\cT_1$ is to approximate the unknown set~$S$ regardless of the padding function~$f$. So we consider the quantum state~$\sigma_S$ corresponding to~$t$ quantum samples~$\ket{\psi_{S,f}}^{\tensor t}$ for uniformly random $f$. Treating~$f$ as an element of~$\N_p^n$, we have
\begin{equation} 
\label{ens_avg}
\sigma_S \speq  \frac 1 {p^n} \sum_{f\in \N_p^n} (\ketbra{\psi_{S,f}}{\psi_{S,f}})^{\otimes t} 
  \speq \frac 1 {p^nk^{t}} \sum_{\bi,\bj\in S^t} \ketbra{\bi}{\bj}\otimes\sum_{f\in \N_p^n}\ketbra{f(\bi)}{f(\bj)}  \enspace,
\end{equation}
where, for ease of notation, we have rearranged the registers in the final expression to collect the padding together.

Viewing~$\naturals_p^n$ as the additive group~$\integers_p^n$, we see that~$\sigma_S$ is invariant under the action induced by~$f \mapsto f + g$, for~$g \in \naturals_p^n$. Thus~$\sigma_S$ is block-diagonal in the same basis as this action, for every~$S$. 
\suppress{
The action is given by the operator
\begin{equation}
\label{eq-action}
\sum_{i \in [n]} \density{i} \tensor \sum_{x \in \naturals_p} \ketbra{x + g_i}{x}
\end{equation}
for every pair of registers containing an element~$j$ of set~$S$ and the corresponding padding~$f(j)$. Note that the operator is diagonalised by~$\id \tensor F$, where~$F$ is the Fourier transform over~$\integers_p$. 
}
With this in mind, consider~$\rho_S \eqdef (\id \tensor F^{\tensor t}) \, \sigma_S (\id \tensor (F^\adjoint)^{\tensor t})$ where~$F$ is the Fourier transform over~$\integers_p$:
\[
F \speqdef \frac{1}{\sqrt{p}} \sum_{j \in \naturals_p} \upomega^{ij} \ketbra{j}{i} \enspace,
\]
and~$\upomega \eqdef \exp \big(\frac{2\pi \complexi}{p} \big)$ is a primitive~$p$-th root of unity.
We have
\begin{align}
\nonumber
\rho_S& \speq \frac 1 {p^nk^{t}} \sum_{\bi,\bj\in S^t} \ketbra{\bi}{\bj}\otimes\sum_{f\in \N_p^n}F^{\otimes t}\ketbra{f(\bi)}{f(\bj)}{(F^*)}^{\otimes t}\\
\label{rho_S_1}
    & \speq  \frac 1 {p^{n+t}k^{t}} \sum_{\bi,\bj\in S^t} \ketbra{\bi}{\bj}\otimes \sum_{\bx,\by\in \N_p^t}  \left(\sum_{f\in \N_p^n}\upomega^{f(\bi)\cdot \bx-f(\bj)\cdot\by}\right)\ketbra{\bx}{\by} \enspace.
\end{align}
Now, we compute the value in brackets in Eq.~\eqref{rho_S_1}, i.e.,
\begin{equation}
\label{eq-ft-expr}
\sum_{f\in \N_p^n}\upomega^{f(\bi)\cdot \bx-f(\bj)\cdot\by} \enspace.
\end{equation}
The scalar product $f(\bi)\cdot \bx$ can be interpreted as a weighted sum of the coordinates of the vector $f(\bi)$, with weights given by $\bx$. Consider what happens if we increase one component of $f$ by $1$. More specifically, suppose we increase the value of $f(q)$ by $1$ for~$q \in [n]$, and keep the rest of~$f$ the same to get a new function $f'$. Then, $f'(x)-f(x)=\mathbbm 1[x=q]$, and we have 
\[
(f(\bi)\cdot \bx-f(\bj)\cdot \by)-(f'(\bi)\cdot \bx-f'(\bj)\cdot \by) 
  \speq  \sum_{r\in [t] \,:\, \bi_r=q} \bx_r - \sum_{r\in [t] \,:\, \bj_r=q} \by_r \enspace.
\]
If this quantity is $0\pmod p$, then changing $f(q)$ does not change $f(\bi)\cdot \bx-f(\bj)\cdot \by \pmod p$. Otherwise, as $f(q)$ cycles through $\N_p$, the expression $f(\bi)\cdot \bx-f(\bj)\cdot \by$ cycles through all multiples of the quantity $\pmod p$. Since~$\sum_{i=0}^{p-1} \upomega^{ij} = 0$ for any non-zero~$j \in \naturals_p$, we have that if 
\[
\sum_{r\in [t] \,:\, \bi_r=q} \bx_r - \sum_{r\in [t] \,:\, \bj_r=q} \by_r \speq 0 \pmod{p}
\]
for all $q$, then the sum in Eq.~\eqref{eq-ft-expr} is $p^n$, and otherwise it equals~$0$. This leads us to defining the \textit{modular signature\/} $\ms:[n]^t \times \N_p^t \to \N_p^n$ of a pair~$(\bi, \bx)$; for each~$q \in [n]$, the~$q$-th coordinate of the modular signature is given by
\begin{equation}
\label{eq-def-ms}
\ms (\bi,\bx)_q \speqdef  \sum_{r\in [t] \,:\, \bi_r=q}\bx_r  \pmod{p} \enspace.
\end{equation}
With this notion, we can evaluate the expression in Eq.~\eqref{eq-ft-expr} as
\[
\sum_{f\in \N_p^n}\upomega^{f(\bi)\cdot \bx-f(\bj)\cdot\by} \speq
\begin{cases}
    p^n &\text{if }\ms(\bi,\bx)=\ms(\bj,\by)\\
    0 &\text{otherwise.}
\end{cases}
\]
Combining this with Eq.~\eqref{rho_S_1}, we get
\begin{align}
\label{rho_s_ms}
\nonumber
\rho_S & \speq  \frac 1 {p^{t}k^{t}} \sum_{\substack{\bi,\bj\in S^t \,;\, \bx, \by \in \naturals_p^t \,:\, \\ \ms (\bi, \bx)=\mathrm{ms}(\bj,\bx)}} \ketbra{\bi}{\bj}\otimes  \ketbra{\bx}{\by}\\
    & \speq  \frac 1 {p^{t}k^{t}} \sum_{\bb \in \N_p^n} \ketbra {\phi_{S,\bb}}{\phi_{S,\bb}} \enspace,
\end{align}
where~$\ket{\phi_{S,\bb}}$ is the unnormalised state defined as
\[
\ket{\phi_{S,\bb}} \speqdef \sum_{ \bi\in S^t \,,\, \bx \in \naturals_p^t \,:\, \ms(\bi,\bx)=\bb} \ket \bi \ket \bx \enspace.
\]
Note that the states~$\ket{\phi_{S,\bb}}$ are orthogonal for different~$\bb$, and
\begin{align}
\label{eq-norm}
\sum_{\bb} \norm{\phi_{S,\bb}}^2 = k^t p^t \enspace. 
\end{align}
Note also that if~$\ket{\phi_{S,\bb}} \neq 0$, i.e., the~$\bb$ occurs as the modular signature of some pair~$\bi \in S^t, \bx \in \naturals_p^t \,$, then for any~$i \in [n]$, if we have~$\bb_i \neq 0$, then~$i \in S$. In other words, if~$\ket{\phi_{S,\bb}} \neq 0$, we have~$\support(\bb) \subseteq S$.

Since the ensemble of states~$(\rho_S) $ is related to~$(\sigma_S)$ by the Fourier transform, the task~$\cT_1$ is equivalent to approximating~$S$ given~$\rho_S$, for a uniformly random subset~$S$. 

\subsection{Relation to the classical case}
\label{sec-relation-classical}

Next, we show that the states~$\rho_S$ may be approximated closely by states~$\rho'_S$ which we define below, and the latter are easier to analyse. In fact, we show that estimating the set~$S$ given~$\rho'_S$ is closely related to the \emph{classical\/} Coupon Collector process.

Define $\rho'_S$ to be the following quantum state.
\[
\rho'_S  \speqdef  \frac 1 {p^{t}k^{t}} \sum_{\bb \in \N_p^n} \frac {\norm{\phi_{S,\bb}}^2}{\norm{\phi_{\bb}}^2}\ketbra {\phi_{\bb}}{\phi_{\bb}} \enspace,
\]
where 
\[
\ket{\phi_\bb} \speqdef \sum_{\bi\in \supp(\bb)^t ,\, \bx \in \naturals_p^t \,:\, \ms(\bi,\bx)= \bb} \ket \bi\ket \bx \enspace.
\]
I.e., we replace $\ket{\phi_{S,\bb}}$ in $\rho_S$ with the unormalised state $\ket{\phi_\bb}$ and normalise appropriately to construct~$\rho'_S \,$. Note that the states~$\ket{\phi_\bb}$ are also orthogonal to each other for different~$\bb$. Due to the properties of the states~$\ket{\phi_{S,\bb}}$ described in \Cref{sec-qpcc-ensemble}, we see that~$\rho'_S$ has support on~$\ket{\phi_\bb}$ only if~$\support(\bb) \subseteq S$. Moreover, for modular signature~$\bb$ and any two subsets~$S,S'$ of size~$k$ both of which contain~$\support(\bb)$, we have~$\norm{ \phi_{S,\bb}}^2 = \norm{ \phi_{S',\bb}}^2$, by symmetry. (We may permute~$[n]$ to map~$S$ to~$S'$ while preserving~$\support(\bb)$ and the number of basis elements~$\ket{\bi, \bx}$ occurring in the two states.)

Consider the distance between~$\rho_S$ and~$\rho'_S \,$.
\begin{align}
\nonumber
\trnorm{\rho_S - \rho'_S} 
  & \sple \frac{1}{p^{t}k^{t}} \sum_{\bb \in \N_p^n} \norm{\phi_{S,\bb}}^2 \trnorm{ \frac{ \density{ \phi_{S,\bb}}}{ \norm{\phi_{S,\bb}}^2 } - \frac{\density{\phi_{\bb}}}{ \norm{\phi_{\bb}}^2} } \\
\label{eq-distance}
  & \sple \max_{T \,:\, \support(\bb)\subseteq T} \trnorm{ \frac{ \density{ \phi_{T,\bb}}}{ \norm{\phi_{T,\bb}}^2 } - \frac{\density{\phi_{\bb}}}{ \norm{\phi_{\bb}}^2} } \enspace,
\end{align}
by Eq.~\eqref{eq-norm} and the relationship between~$\support(\bb)$ and the set~$S$ described above. In \Cref{p_bound} in \Cref{sec-approx-error} we show that this distance is bounded above by~$\tfrac{1}{2} \sqrt{k^t / p} \,$. The high-level reason this bound holds is that most of the terms~$\ket{\bi,\bx}$ occuring in $\ket{\phi_{S,b}}$ also occur in $\ket{\phi_\bb}$.

Call the task of estimating $S$ with at most~$l$ mismatches given $\rho'_S$ instead of $\rho_S$ as $\cT_2$. (As before, the $S$ is chosen uniformly at random.) Denote the maximum success probability over all algorithms for task~$\cT_i$ as~$\opt(\cT_i)$, for~$i \in \set{1,2}$. We have
\begin{align}
\nonumber
\opt(\cT_1) & \sple \opt(\cT_2) + \frac{1}{2} \max_{S \,:\, \supp(\bb)\subseteq S } \trnorm{\frac{\ketbra{\phi_{S,\bb}}{\phi_{S,\bb}}}{\norm{ \phi_{S,\bb} }^2}-\frac{\ketbra{\phi_{\bb}}{\phi_{\bb}}}{\norm{ \phi_{\bb} }^2}} \\
\label{error_term_p}
  & \sple \opt(\cT_2) + \sqrt{ \frac{k^t}{p} } & \text{(By \Cref{p_bound}).}
\end{align}
So we focus our attention on bounding $\opt(\cT_2)$.

Since $\rho'_S$ is diagonalised via the modular signature, i.e., by the (unnormalised) states~$\ket{ \phi_\bb}$, without loss in generality, we assume that an optimal learner~$\cA$ for task $\cT_2$ first measures the modular signature. If $\cA$ observes $\bb$ as the modular signature, the resultant state is $\ket{\phi_\bb}$, regardless of the unknown set $S$. Since $\cA$ optimizes the average error, and the \emph{a posteriori\/} probability of any size~$k$ subset~$S'$ containing~$\support(\bb)$ is the same, the optimal algorithm~$\cA$ outputs a uniformly random such subset~$S'$. This implies that as long as the support of~$\bb$ has size at least~$k - l$, the algorithm produces an estimate with at most~$l$ mismatches with~$S$.

It remains to bound the probability of observing $\bb$ with $|\supp(\bb)| \ge k - l$. The probability of observing a modular signature~$\bb$ with support size~$b$ for a given $S$ and $\rho'_S$ is
\begin{align*}
\frac {1}{p^tk^t}\sum_{\bb \,:\, |\supp(\bb)|=b}\norm{\phi_{S,\bb}}^2
  & \speq \frac{1}{p^tk^t}\sum_{\bb \,:\, |\supp(\bb)|=b} \Big\lvert\{ (\bi,\bx): \bi\in S^t, \bx \in \naturals_p^t \,, \ms(\bi,\bx)=\bb\} \Big\rvert \\
        & \speq  \frac{1}{p^tk^t} \Big\lvert\{ (\bi,\bx): \bi\in S^t, \bx \in \naturals_p^t \,, |\supp(\ms(\bi,\bx))|=b\} \Big\rvert \\
        & \speq \Pr_{\bI\sim S^t,\bX\sim \N_p^t} \big[ \,|\supp(\ms(\bI,\bX))|=b \big] \enspace,
\end{align*}
where~$\bI$ and~$\bX$ are drawn uniformly at random from~$S^t$ and~$\N_p^t \,$, respectively.
Thus, the probability that the modular signature $\bb$ observed has $|\supp(\bb)|\geq b$ is bounded as
\[
\Pr_{\bI\sim S^t,\bX\sim \N_p^t} \Big[ \,|\supp(\ms(\bI,\bX))|\geq b \Big] 
  \sple \Pr_{\bI\sim S^t} \big[ \, \size{\range(\bI)} \geq b \big] \enspace.
\]
Note that the bound is exactly the probability of collecting at least $b$ coupons from~$S$ in $t$ steps. This leads us to an approximation variant of the classical Coupon Collector problem. Define~$\cT_3$ as the task of estimating the set~$S$ by a size~$k$ subset~$S'$ with~$\size{S' \setminus S} \le l$ mismatches, given~$t$ independent samples from an unknown, uniformly random subset $S\subseteq[n]$ of size $k$. The reasoning above gives us the following lemma. 
\begin{lemma}
\label{QPCC_to_CC}
The success probability $\opt(\cT_2)$ is bounded above by the optimal probability of success for the task~$\cT_3$.
\end{lemma}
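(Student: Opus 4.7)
The plan is to reduce $\cT_2$ to $\cT_3$ via a simple classical simulation. The idea is that $\rho'_S$ is essentially classical in the modular-signature basis, and the probability of each signature outcome equals the probability of that signature arising from classical samples $\bI \sim S^t$ paired with independent uniform randomness $\bX \sim \N_p^t\,$. This lets any algorithm for $\cT_2$ be run on classical data.

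The first step is to observe, as already set up in \Cref{sec-relation-classical}, that the eigenspaces of $\rho'_S$---spanned by the normalised $\ket{\phi_\bb}/\norm{\phi_\bb}$---do not depend on $S$; only the eigenvalues $\norm{\phi_{S, \bb}}^2/(p^t k^t)$ do. Hence $\rho'_S$ is a classical-quantum state in this basis, and we may assume without loss of generality that an optimal $\cT_2$ learner first measures the signature to obtain $\bb \in \N_p^n$ and then applies some (possibly randomised) rule $\bb \mapsto S'$. Using the combinatorial identity $\norm{\phi_{S, \bb}}^2 = \size{ \set{ (\bi, \bx) \in S^t \times \N_p^t : \ms(\bi, \bx) = \bb } }$, the probability of observing $\bb$ given $S$ is exactly
\[
\frac{\norm{\phi_{S, \bb}}^2}{p^t k^t} \speq \Pr_{\bI \sim S^t,\, \bX \sim \N_p^t} \big[ \ms(\bI, \bX) = \bb \big].
\]

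The second step is the simulation. Given classical samples $\bi \in S^t$ for task $\cT_3$, a learner can internally sample $\bx \sim \N_p^t$ uniformly and independently, compute $\bb \eqdef \ms(\bi, \bx)$, and apply the $\cT_2$ rule to produce $S'$. By the displayed identity, the joint distribution of $(S, \bb, S')$ under this simulation coincides with the corresponding distribution in the $\cT_2$ algorithm run on $\rho'_S$, so both algorithms succeed with the same probability. Since the simulation is a valid $\cT_3$ algorithm, we conclude $\opt(\cT_2) \le \opt(\cT_3)$.

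There is no substantive obstacle here; all the heavy lifting lives in the earlier spectral analysis of $\rho'_S \,$. The single point worth stating explicitly is that it is the $S$-independence of the \emph{eigenspaces\/}---not just of the eigenvalues---that lets us defer all quantum processing to a single signature measurement, which is precisely what makes the classical simulation possible.
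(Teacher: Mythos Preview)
Your proof is correct and rests on the same key identity the paper derives, namely that the probability of observing signature~$\bb$ under~$\rho'_S$ equals~$\Pr_{\bI \sim S^t,\, \bX \sim \N_p^t}[\ms(\bI,\bX)=\bb]$. The paper, however, organises the argument differently: it first characterises the optimal~$\cT_2$ learner explicitly (the posterior over~$S$ given~$\bb$ is uniform over size-$k$ supersets of~$\support(\bb)$, so the optimal output is a random such superset) and then compares to~$\cT_3$ via the pointwise domination~$\support(\ms(\bI,\bX)) \subseteq \range(\bI)$, hence the stochastic domination of~$|\support(\bb)|$ by~$|\range(\bI)|$. Your black-box simulation---draw fresh~$\bX$, compute~$\bb=\ms(\bI,\bX)$, and apply the~$\cT_2$ decision rule---is more direct: it sidesteps both the identification of the optimal rule and any monotonicity argument. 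What the paper's route buys in exchange is an explicit description of the optimal learner and of the sufficient statistic~$|\support(\bb)|$, which then feeds naturally into the Coupon-Collector estimates of \Cref{sec_clas_cc}.
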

The Coupon Collector process is well-studied, and a bound on the success probability for task~$\cT_3$ is likely known. For completeness, we derive a bound on the probability in \Cref{clas_coup_coll} in \Cref{sec_clas_cc}.

We put all this together to prove \Cref{QPCC_thm}, which we restate here for convenience.
\thmqpccbound*
\begin{proof}
Increasing $n$ and decreasing the number of samples~$t$ only makes the problem harder, and thus it suffices to show that the success probability is strictly smaller than~$1 - \delta$ when $n \eqdef k + 5l$ and $t \eqdef t_0 - 1$. Taking~$l$ and~$p$ as in the theorem statement, 
by \Cref{clas_coup_coll}, the probability of estimating a uniformly random size~$k$ subset with at most~$l$ mismatches is~$1 - 2 \delta$. By \Cref{QPCC_to_CC} we get $\opt(\cT_2) < 1- 2\delta $. By our choice of~$p$ and Eq.~\eqref{error_term_p} we have $\opt(\cT_1) < 1-2\delta+\delta = 1-\delta$. Since $\cT_1$ is the task of solving the Quantum Padded Coupon Collector problem for a uniformly random subset (and uniformly random padding), the theorem follows.
\end{proof}

\subsection{Reduction to proper PAC learning}
\label{sec-pac-reduction}

Similar to the classical case, for each $d$ and $\epsilon$, we describe a concept class for which proper quantum PAC learning (with up to a small constant probability of error) has sample complexity $\Omega\left(\frac d {\epsilon}\log \frac 1 \epsilon+\frac 1 \epsilon \log \frac 1 \delta\right)$. The latter summand is obtained via a straightforward argument as in Refs.~\cite{AW18-optimal-sample,BNS24-sample-complexity}, so we focus on the first summand. 

Fix positive integer parameters~$k,p,d$. Let $n \eqdef k+d$.
We define the concept class $\scC_{n,k,p}$ over the domain $X \eqdef [n]\times \N_p$ as follows:
\[
\scC_{n,k,p} \speqdef \{g: \exists S\subseteq [n], |S|=k,  g(i,x)=\mathbbm 1[i\in S] \; \forall i\in [n], x\in \N_p \} \enspace.
\]
In the above, we refer to the concept corresponding to $S$ as $g_S$. We may verify that for~$k \ge d$, the VC dimension of the class~$\scC_{n,k,p}$ is exactly $d$. This is a ``padded'' version of the concept class~$\scC_{n,k}$ used to show a separation between proper and improper learning in the classical case. It is related to the Quantum Padded Coupon Collector problem in the following sense.
\begin{lemma}
Any proper $(\epsilon,\delta)$-PAC quantum learner $\cal A$ for $\scC_{n,k,p}$ also solves the Quantum Padded Coupon Collector problem with the same parameters $n,k,p$, and with~$l \eqdef \lfloor \epsilon k\rfloor $, with probability at least $1-\delta$, with the same number of samples.
\end{lemma}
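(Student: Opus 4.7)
The plan is to exhibit, given an instance $(S, f)$ of the Quantum Padded Coupon Collector problem, a specific concept-distribution pair $(g_S, D_f)$ for the PAC task such that the quantum PAC examples for $(g_S, D_f)$ coincide (up to a trivial label register) with the QPCC quantum samples $\ket{\psi_{S,f}}$, and such that any hypothesis with PAC error less than $\epsilon$ must, because the learner is proper, correspond to a size-$k$ subset $S'$ with $|S' \setminus S| \le l$. Running $\cA$ on the simulated oracle then yields a valid QPCC output with probability at least $1-\delta$.

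Concretely, I would first define the distribution $D_f$ on $X = [n] \times \N_p$ by
\[
D_f(i,x) \speqdef \frac{1}{k}\,\mathbbm{1}[i \in S \text{ and } x = f(i)] \enspace,
\]
so that $D_f$ is supported exactly on the graph of $f$ restricted to $S$. The associated quantum PAC example is
\[
\sum_{(i,x) \in X} \sqrt{D_f(i,x)}\,\ket{i,x,g_S(i,x)} \speq \frac{1}{\sqrt{k}}\sum_{i \in S}\ket{i}\ket{f(i)}\ket{1} \enspace,
\]
because $g_S(i,x) = \mathbbm{1}[i \in S]$ does not depend on $x$ and equals $1$ on the support of $D_f$. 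This state is $\ket{\psi_{S,f}}$ tensored with a fixed ancilla $\ket{1}$, so a single QPCC sample can be converted to a single quantum PAC example (and vice versa) by appending or discarding the trivial last qubit. Thus the learner $\cA$, run on the simulated oracle $\QPEX(g_S, D_f)$, uses exactly the same number of QPCC samples.

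Next I would analyse the PAC error. Since $\cA$ is proper, its output hypothesis is $h = g_{S'}$ for some size-$k$ subset $S' \subseteq [n]$. The error under $D_f$ is
\[
\Pr_{(i,x)\sim D_f}\!\left[ g_{S'}(i,x) \neq g_S(i,x) \right] \speq \Pr_{i \sim S}\!\left[ i \notin S' \right] \speq \frac{|S \setminus S'|}{k} \speq \frac{|S' \setminus S|}{k} \enspace,
\]
where the last equality uses $|S| = |S'| = k$. The guarantee that this error is at most $\epsilon$ with probability $\ge 1-\delta$ therefore translates to $|S' \setminus S| \le \lfloor \epsilon k \rfloor = l$ with probability $\ge 1-\delta$, which is exactly the QPCC success criterion.

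There is no real technical obstacle here beyond bookkeeping; the only subtlety is checking that choosing $D_f$ with support confined to $S$ is legal in the PAC model (it is, since $D$ may be arbitrary), and that the trivial label register can be stripped without affecting $\cA$'s behaviour (it can, since $\cA$ is given the PAC-form example and treats the label as part of its input). Everything else is a direct identification of the two sampling oracles.
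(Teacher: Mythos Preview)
Your proposal is correct and follows essentially the same approach as the paper: both define the distribution supported on the graph of $f$ restricted to $S$, identify the resulting quantum PAC example with $\ket{\psi_{S,f}}\ket{1}$, and use properness plus $|S|=|S'|=k$ to convert the $\epsilon$-approximation guarantee into the mismatch bound $|S'\setminus S|\le \lfloor \epsilon k\rfloor$.
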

\begin{proof}
Consider the single quantum sample $\ket{\psi_{S,f}}$ for the Quantum Padded Coupon Collector problem, with an additional qubit in the state $\ket 1$. The joint (pure) state is
\[ 
\ket{\psi_{S,f}}\ket 1 \speq \sum_{i\in S} \frac 1 {\sqrt S} \ket{i,f(i)}\ket 1 \enspace.
\]
Consider the distribution $D_{S,f}$ which is uniform over the the $k$ elements $(i,f(i))\in [n]\times \N_p$ given by $i\in S$, and the concept $g_S$. The state $\ket{\psi_{S,f}}\ket 1$ is a quantum sample for the concept $g_S$ with the distribution $D_{S,f}$. Hence, given samples~$\ket{\psi_{S,f}}$, we can construct the same number of quantum samples for~$g_S$ corresponding to the distribution $D_{S,f}$, and then feed them to the quantum learner $\cA$ for~$\scC_{n,k,p}$.

Since $\cal A$ is a proper learner, with probability at least $1-\delta$, it produces a concept $g_{S'}\in \scC_{n,k,p}$ which is an $\epsilon$ approximation of $g_S$ with respect to the distribution $D_{S,f}$. From this, we get
    \begin{alignat*}{2}
      & &\Pr_{(I,X)\sim D_{S,f}}[g_{S'}(I,X)\neq g_{S}(I,X)]\quad \leq&\quad  \epsilon\\
      &\implies\quad & \Pr_{I\sim S}[\mathbbm 1[I\in S']\neq 1]\quad \leq& \quad \epsilon\\
      &\implies\quad & |S\setminus S'|\quad \leq&\quad  \epsilon\cdot k\\
      &\implies\quad & |S' \setminus S|\quad \leq&\quad  \lfloor\epsilon\cdot k\rfloor \speq l \enspace. \\
    \end{alignat*}
Hence, with probability at least $1-\delta$, the output $S'$ has at most~$l$ mismatches with~$S$, and satisfies the requirements of the Quantum Padded Coupon Collector problem.
\end{proof}

From this, and the bound in \Cref{QPCC_thm}, the following is immediate.
\begin{corollary} 
For $d\geq 5$ and $k \eqdef \frac d {5\epsilon} \,$, the sample complexity of any quantum algorithm for proper $(\epsilon,\delta)$-PAC learning the concept class~$\scC_{n,k,p}$ is at least $t_0 \,$, where $p$ and $t_0$ are as in \Cref{QPCC_thm}. In particular, for failure probability~$\delta \le \frac 1 8 \,$ and padding length~$p \eqdef 64 k^{k\ln \frac {k+1}{10l+1}}$, the sample complexity is at least $\Omega\left( k  \log \frac {k+1} {l+1}\right)$, i.e., $\Omega(\frac d \epsilon \log \frac 1 \epsilon)$.
\end{corollary}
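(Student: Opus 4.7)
The plan is to combine the preceding reduction lemma with \Cref{QPCC_thm}. First I would use the lemma to translate any proper $(\epsilon,\delta)$-PAC learner for $\scC_{n,k,p}$ into a solver for the Quantum Padded Coupon Collector problem with the same $n,k,p$ and mismatch parameter $l \eqdef \lfloor \epsilon k \rfloor$, using the same number of samples. Any sample-complexity lower bound for the latter then transfers to the former.

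Next, I would verify that the hypotheses of \Cref{QPCC_thm} hold for the chosen parameters. With $k \eqdef d/(5\epsilon)$ we have $l = \lfloor d/5 \rfloor$, so $5l \le d$ and hence $n = k + d \ge k + 5l$; the requirement $l \ge 1$ holds because $d \ge 5$. The most delicate point is the padding condition $p \ge k^{t_0}/\delta^2$: since $\delta \le 1/8$ gives $\ln(1-4\delta) \le 0$ and therefore $t_0 \le k \ln \tfrac{k+1}{10l+1}$, the choice $p \eqdef 64\, k^{k \ln \frac{k+1}{10l+1}}$ satisfies $k^{t_0}/\delta^2 \le 64\, k^{k \ln \frac{k+1}{10l+1}} = p$. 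This padding choice is the main subtlety---it is engineered precisely so that the $\sqrt{k^t/p}$ approximation error from Eq.~\eqref{error_term_p} remains under control uniformly for $t$ up to $t_0$. With the hypotheses verified, \Cref{QPCC_thm} immediately yields the general lower bound of $t_0$ samples on any proper $(\epsilon,\delta)$-PAC learner for $\scC_{n,k,p}$.

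For the ``in particular'' asymptotic bound, I would estimate the two summands of $t_0 = k \ln \frac{k+1}{10l+1} + k \ln(1-4\delta)$ separately. Because $\delta \le 1/8$ gives $1 - 4\delta \ge 1/2$, the second summand satisfies $k \ln(1-4\delta) \ge -k \ln 2$, which is $-O(d/\epsilon)$. For the first summand, $l \le d/5$ yields $10l + 1 \le 2d + 1$ while $k + 1 > d/(5\epsilon)$, so $\frac{k+1}{10l+1} = \Omega(1/\epsilon)$ and hence $k \ln \frac{k+1}{10l+1} = \Omega\left(\tfrac{d}{\epsilon} \log \tfrac{1}{\epsilon}\right)$. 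This dominates the $-O(d/\epsilon)$ correction from the second summand, giving the claimed $\Omega\left(\tfrac{d}{\epsilon}\log \tfrac{1}{\epsilon}\right)$ sample complexity lower bound. The entire proof is essentially bookkeeping: the real technical work is already discharged by the reduction lemma and by the spectral-plus-classical-approximation argument behind \Cref{QPCC_thm}.
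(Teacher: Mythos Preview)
Your approach matches the paper's: apply the reduction lemma, then invoke \Cref{QPCC_thm} after checking its hypotheses. The detailed hypothesis-checking and the asymptotic estimate in your third paragraph are correct.

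There is one slip in the verification of the padding condition. You write that $\delta \le \tfrac{1}{8}$ gives $k^{t_0}/\delta^2 \le 64\, k^{k\ln\frac{k+1}{10l+1}}$, but this inequality requires $1/\delta^2 \le 64$, i.e., $\delta \ge \tfrac{1}{8}$; for $\delta$ strictly below $\tfrac{1}{8}$ the bound goes the wrong way and the fixed choice of $p$ no longer satisfies $p \ge k^{t_0}/\delta^2$. The fix is the standard monotonicity observation: any $(\epsilon,\delta)$-PAC learner with $\delta \le \tfrac{1}{8}$ is in particular an $(\epsilon,\tfrac{1}{8})$-PAC learner, so it suffices to apply \Cref{QPCC_thm} at $\delta = \tfrac{1}{8}$, where your verification becomes an equality and yields $t_0 = k\ln\tfrac{k+1}{10l+1} - k\ln 2 = \Omega\!\left(\tfrac{d}{\epsilon}\log\tfrac{1}{\epsilon}\right)$. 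This is exactly what the paper has in mind, as the remark immediately following the corollary (``if we fix $\delta \eqdef \tfrac{1}{8}$\ldots'') confirms.
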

Note that if we fix $\delta \eqdef \frac 1 8 \,$, the lower bound holds for a fixed sufficiently large value of the padding length~$p$, and hence for a fixed concept class $\scC_{n,k,p}$ independent of $\delta$.
Combined with the general lower bound of $\Omega\left(\frac 1 \epsilon \log \frac 1 \delta\right)$ from \cite{AW18-optimal-sample,BNS24-sample-complexity}, and the general improper (classical) PAC learning algorithm due to Hanneke~\cite{Han16-opt-PAC} which has sample complexity $\Order \! \left(\frac d \epsilon +\frac 1 \epsilon \log \frac 1 \delta \right)$,  we get \Cref{thm-qpac-prop-vs-improp}.

\bibliographystyle{plain}
\bibliography{bibl}

\appendix

\section{Error due to approximation}
\label{sec-approx-error}

In \Cref{sec-relation-classical}, we approximate an ensemble arising from the Quantum Padded Coupon Collector Problem with another ensemble that is easier to analyse. Here, we bound the error term that arises as a consequence; see Eq.~\eqref{error_term_p}.

We begin with two claims that are useful in bounding the error. The claims count the number of pairs of sequences~$(\bi,\bx)$ with a given modular signature under different conditions on~$\bi$. (See Eq.~\eqref{eq-def-ms} for the definition of modular signature.)
\begin{lemma}
\label{claim_6.2}
For any integers~$n, t, p \ge 1$, and sequence~$\bb \in \naturals_p^n \,$,
\begin{align*}
\lefteqn{ \Big| \set{ (\bi,\bx) : \bi \in \support(\bb)^t, \bx \in \naturals_p^t \,, \ms(\bi,\bx)=\bb } \Big| } \\
  &  \speq p^{t-|\support(\bb)|} \times \Big| \set{ \bi \in \support(\bb)^t : \range(\bi)=\support(\bb)  } \Big| \enspace.
\end{align*}
\end{lemma}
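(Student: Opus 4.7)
The plan is to fix the sequence $\bi \in \support(\bb)^t$, count the number of compatible $\bx \in \naturals_p^t$, and then sum over $\bi$. The main observation is that the system of constraints $\ms(\bi,\bx)=\bb$ decouples across the preimages $\bi^{-1}(q)$ for $q \in [n]$, since the sum defining $\ms(\bi,\bx)_q$ only involves the coordinates $\bx_r$ with $\bi_r = q$, and these index sets partition $[t]$.

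First I would handle the trivial coordinates: since $\bi \in \support(\bb)^t$, for every $q \notin \support(\bb)$ the preimage $\bi^{-1}(q)$ is empty, so $\ms(\bi,\bx)_q = 0 = \bb_q$ holds automatically, with no constraint on $\bx$. Hence only the constraints indexed by $q \in \support(\bb)$ matter.

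Next I would split into two cases. If $\range(\bi) \subsetneq \support(\bb)$, then there exists some $q \in \support(\bb)$ with $\bi^{-1}(q) = \emptyset$, which forces $\ms(\bi,\bx)_q = 0 \neq \bb_q$ and hence no $\bx$ is compatible. Otherwise $\range(\bi) = \support(\bb)$, and for each $q \in \support(\bb)$ the constraint
\[
\sum_{r \in \bi^{-1}(q)} \bx_r \speq \bb_q \pmod p
\]
is a single linear equation in $n_q \eqdef |\bi^{-1}(q)| \ge 1$ variables. Choosing any $n_q - 1$ of these coordinates freely in $\naturals_p$ uniquely determines the remaining one, giving $p^{n_q - 1}$ solutions.

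Since the constraints for distinct $q$ involve disjoint sets of variables, the number of compatible $\bx$'s for such a $\bi$ is
\[
\prod_{q \in \support(\bb)} p^{n_q - 1} \speq p^{\sum_{q} n_q - |\support(\bb)|} \speq p^{t - |\support(\bb)|},
\]
using $\sum_{q \in \support(\bb)} n_q = t$. Summing over all $\bi \in \support(\bb)^t$ and noting that only those with $\range(\bi) = \support(\bb)$ contribute non-trivially yields the claimed identity. I do not anticipate a real obstacle here, as the proof is a routine counting argument; the one subtlety is remembering that the sequences $\bi$ with strictly smaller range contribute zero, which is exactly why the factor on the right-hand side restricts to $\range(\bi) = \support(\bb)$.
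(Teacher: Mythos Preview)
Your proposal is correct and follows essentially the same approach as the paper: both arguments show that only sequences $\bi$ with $\range(\bi)=\support(\bb)$ contribute, and for each such $\bi$ count the solutions $\bx$ by observing that the constraints $\ms(\bi,\bx)=\bb$ decouple into independent linear equations over the partition $\{\bi^{-1}(q)\}_{q\in\support(\bb)}$, each contributing a factor $p^{n_q-1}$.
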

\begin{proof}
Consider~$(\bi, \bx)$ such that~$\bi \in \support(\bb)^t$ and~$\ms(\bi, \bx)=\bb$. First, we have~$\range(\bi) \subseteq  \support(\bb)$. Second, if there is some element~$q \in [n]$ that does not appear in the sequence~$\bi$, then $\bb_q=0$, i.e, $q \notin \support(\bb)$. So we also have~$\range(\bi) \supseteq  \support(\bb)$, and hence~$\range(\bi) = \support(\bb)$.

For a fixed $\bb$ and $\bi\in \support(\bb)^t$ with $\range(\bi) = \support(\bb)$, the equation $\ms(\bi, \bx)=\bb$ is a linear equation in the variables~$\bx$. Furthermore, for each~$j \in [n]$, $\bb_j$ is a sum of the variables in~$X_j \eqdef \set{ \bx_r : \bi_r = j}$. The sets~$(X_j)$ form a partition of the~$t$ variables in~$\bx$. Let~$t_j \eqdef \size{X_j}$. Note that the equation with~$\bb_j$ is non-trivial if and only if~$j$ appears in $\bi$, i.e., $j \in \range(\bi)$ and~$t_j > 0$. If the equation is non-trivial, it has~$p^{t_j-1}$ solutions. Thus, the total number of solutions~$\bx$ to $\ms(\bi,\bx)=\bb$ is
\[
p^{\sum_{j = 1}^n t_j - \size{ \range(\bi)} } \speq p^{t- \size{\range(i)}} \speq p^{t- \size{\support(\bb)} } \enspace.
\]
Multiplying this with the number of $\bi$'s with $\range(\bi)=\support(\bb)$, we get the claimed identity.
\end{proof}

\begin{lemma}
\label{claim_6.3}
For any integers~$n, t, p \ge 1$, set~$S \subseteq [n]$, and sequence~$\bb \in \naturals_p^n \,$,
\[
\Big|\set{(\bi,\bx): \bi\in S^t, \bx \in \naturals_p^t \,, \ms(\bi,\bx)=\bb } \Big| \speq  \sum_{T \,:\, \support(\bb)\subseteq T\subseteq S}p^{t-|T|} \times \Big|\set{\bi\in T^t : \range(\bi)=T  } \Big| .
\]
\end{lemma}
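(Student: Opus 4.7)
The plan is to refine the counting argument of \Cref{claim_6.2} by partitioning the set on the left-hand side according to the value of $\range(\bi)$.

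First, I would observe two constraints on $\range(\bi)$ for any pair $(\bi,\bx)$ counted on the left. Clearly $\range(\bi)\subseteq S$ since $\bi\in S^t$. Moreover, for any $q\in[n]\setminus\range(\bi)$, the defining sum in Eq.~\eqref{eq-def-ms} is empty, so $\bb_q=\ms(\bi,\bx)_q=0$, which forces $\support(\bb)\subseteq\range(\bi)$. Thus as $(\bi,\bx)$ varies over the set on the left, the value $T\eqdef\range(\bi)$ ranges exactly over subsets with $\support(\bb)\subseteq T\subseteq S$, and partitions the set.

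Second, for a fixed such $T$, I would count pairs $(\bi,\bx)$ with $\range(\bi)=T$ and $\ms(\bi,\bx)=\bb$. The condition $\range(\bi)=T$ together with $\bi\in S^t$ is equivalent to requiring $\bi\in T^t$ with $\range(\bi)=T$, which gives exactly $\bigl|\set{\bi\in T^t : \range(\bi)=T}\bigr|$ choices. For each such fixed $\bi$, the system $\ms(\bi,\bx)=\bb$ decouples over the index $j\in[n]$ exactly as in the proof of \Cref{claim_6.2}. For $j\notin T$ we have $\bb_j=0$ (since $\support(\bb)\subseteq T$) and the equation is trivial; for $j\in T=\range(\bi)$ the equation $\bb_j=\sum_{r:\bi_r=j}\bx_r\pmod{p}$ is nontrivial in $t_j\eqdef|\{r:\bi_r=j\}|\geq 1$ variables and contributes $p^{t_j-1}$ solutions. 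Multiplying over $j\in T$ yields $p^{\sum_{j\in T}(t_j-1)}=p^{t-|T|}$ choices of $\bx$.

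Finally, I would combine the two counts and sum over $T$ to obtain the claimed identity. There is no real obstacle here: the main point is to recognize that once $\bi$ is conditioned on having a fixed range $T$, the counting in \Cref{claim_6.2} applies verbatim to $T$ in place of $\support(\bb)$ because the hypothesis $\support(\bb)\subseteq T$ makes the equations indexed by $[n]\setminus T$ automatically satisfied.
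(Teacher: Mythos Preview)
Your proof is correct and follows essentially the same approach as the paper: partition the pairs $(\bi,\bx)$ according to $T=\range(\bi)$, observe that $\support(\bb)\subseteq T\subseteq S$, and count the $\bx$-solutions for each fixed $\bi$ as $p^{t-|T|}$ exactly as in \Cref{claim_6.2}. Your write-up is in fact more explicit than the paper's, which compresses the same argument into a few lines.
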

\begin{proof}
The proof is similar to that of the \Cref{claim_6.2}. We count the total number of pairs $(\bi,\bx)$ in the set on the LHS. Fix an~$\bi \in S^t$. Call $\range(\bi)$ as~$T$. By the same reasoning as in the proof of \Cref{claim_6.2}, since $\ms(\bi,\bx)=\bb$ for some $\bx$, we have~$\support(\bb)\subseteq T$. Moreover, there are $p^{t-|T|}$ sequences~$\bx$ with~$\ms(\bi,\bx)=\bb$. Thus, summing over the possible values for~$T$, we get the desired identity.
\end{proof}

Recall the unnormalised states~$\ket{\phi_{S,\bb}}$ and~$\ket{\phi_\bb}$ defined in \Cref{sec-qpcc}, where~$S \subset [n]$ and~$\bb \in \naturals_p^n$ is a modular signature of length~$t$ sequences. We show that the normalised states are close to each other in trace distance, provided the padding length~$p$ is sufficiently large.
\begin{lemma}
\label{p_bound}
For $S \subset [n]$ and $\bb \in \naturals_p^n$ such that $\support(\bb)\subseteq S$, 
\[
\norm{\frac{\ketbra{\phi_\bb}{\phi_\bb}}{\norm{\phi_\bb}^2}-\frac{\ketbra{\phi_{S,\bb}}{\phi_{S,\bb}}}{\norm{\phi_{S,\bb}}^2}}_1 
  \sple 2 \sqrt{\frac {k^t}{p}} \enspace,
\]
where $k=|S|$, $t$ is the number of samples, and $p$ is the padding length.
\end{lemma}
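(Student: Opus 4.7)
The plan is to exhibit $\ket{\phi_{S,\bb}}$ as an orthogonal extension of $\ket{\phi_\bb}$ and then use the standard pure-state trace distance formula. The starting observation is that the indices $(\bi, \bx)$ summed over to form $\ket{\phi_\bb}$ all satisfy $\bi \in \support(\bb)^t \subseteq S^t$ (using the hypothesis $\support(\bb) \subseteq S$), hence each computational basis vector in $\ket{\phi_\bb}$ appears with coefficient $1$ in $\ket{\phi_{S,\bb}}$ as well. Setting $\ket{\phi'} \eqdef \ket{\phi_{S,\bb}} - \ket{\phi_\bb}$, I would note that $\ket{\phi'}$ is supported only on $(\bi,\bx)$ with $\bi \in S^t \setminus \support(\bb)^t$, so $\ket{\phi'}$ and $\ket{\phi_\bb}$ have disjoint basis support and are therefore orthogonal. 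This gives $\norm{\phi_{S,\bb}}^2 = \norm{\phi_\bb}^2 + \norm{\phi'}^2$ and $\langle \phi_\bb | \phi_{S,\bb}\rangle = \norm{\phi_\bb}^2$, so the trace distance between the two normalised states simplifies to $2\norm{\phi'}/\norm{\phi_{S,\bb}}$.

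It remains to show $\norm{\phi'}^2/\norm{\phi_{S,\bb}}^2 \le k^t/p$. Here I would invoke \Cref{claim_6.2} and \Cref{claim_6.3}: writing $s \eqdef |\support(\bb)|$ and $N(T) \eqdef |\{\bi \in T^t : \range(\bi)=T\}|$, these give $\norm{\phi_\bb}^2 = p^{t-s} N(\support(\bb))$ and $\norm{\phi_{S,\bb}}^2 = \sum_{T : \support(\bb) \subseteq T \subseteq S} p^{t-|T|} N(T)$. Subtracting, $\norm{\phi'}^2$ is precisely the sum over $T \supsetneq \support(\bb)$, and every such $T$ satisfies $|T| \ge s+1$. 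Factoring $p^{t-s-1}$ out of that sum and discarding the remaining $p^{-(|T|-s-1)} \le 1$ factors,
\[
\norm{\phi'}^2 \,\le\, p^{t-s-1} \!\!\!\sum_{T : \support(\bb) \subsetneq T \subseteq S} \!\!\!N(T) \,\le\, p^{t-s-1} \sum_{T \subseteq S} N(T) \,=\, p^{t-s-1} k^t,
\]
where the final equality holds because every $\bi \in S^t$ is counted exactly once in $\sum_{T \subseteq S} N(T)$, namely at $T = \range(\bi)$.

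For the denominator I would use the crude bound $\norm{\phi_{S,\bb}}^2 \ge \norm{\phi_\bb}^2 = p^{t-s} N(\support(\bb)) \ge p^{t-s}$, valid whenever $\ket{\phi_\bb} \ne 0$ (which is implicit in the lemma statement, since otherwise the normalisation is undefined). Dividing the two bounds yields $\norm{\phi'}^2 / \norm{\phi_{S,\bb}}^2 \le k^t / p$, and taking the square root and multiplying by $2$ gives the desired $2\sqrt{k^t/p}$. I do not anticipate a serious obstacle: the counting identities from the two preceding claims do most of the work, and the orthogonal decomposition is essentially dictated by the support containment. The only mildly delicate point is justifying $N(\support(\bb)) \ge 1$ and disposing of the degenerate case $\ket{\phi_\bb} = 0$, which I would handle by noting that the lemma is only meaningful (and its RHS is trivially attained) in the nondegenerate regime $1 \le s \le t$.
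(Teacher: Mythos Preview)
Your proposal is correct and follows essentially the same route as the paper: both use the pure-state trace distance formula, identify $\braket{\phi_\bb}{\phi_{S,\bb}} = \norm{\phi_\bb}^2$, and then bound the ratio $\norm{\phi_\bb}^2/\norm{\phi_{S,\bb}}^2$ via \Cref{claim_6.2} and \Cref{claim_6.3}, using $n_{t,\support(\bb)} \ge 1$ and $\sum_{T \subseteq S} n_{t,T} = k^t$. Your orthogonal-decomposition framing ($\ket{\phi_{S,\bb}} = \ket{\phi_\bb} + \ket{\phi'}$) is a clean way to phrase the same computation; the paper instead writes the ratio as $\ge 1/(1 + k^t/p)$ and then simplifies, but the underlying inequality is identical.
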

\begin{proof}
For any two pure states~$\ket{\xi}, \ket{\zeta}$ in the same space, we have
\[
\trnorm{ \density{\xi} - \density{\zeta} } \speq 2 \sqrt{1 - \size{\braket{\xi}{\zeta}}^2 \,} \enspace.
\]
So,  
\[
\norm{\frac{\ketbra{\phi_\bb}{\phi_\bb}}{\norm{\phi_\bb}^2} - \frac{\ketbra{\phi_{S,\bb}}{\phi_{S,\bb}}}{\norm{ \phi_{S,\bb} }^2}}_1 
  \speq 2 \sqrt {1-\left\lvert\frac {\braket{\phi_\bb}{\phi_{S,\bb}}} {\norm{ \phi_{S,\bb} }\norm{ \phi_\bb} } \right\rvert^2   } \enspace.
\]
Since the amplitudes of the standard basis elements in the states are real, we have
\[
\left\lvert\frac {\braket{\phi_\bb}{\phi_{S,\bb}}} {\norm{ \phi_{S,\bb}}\norm{ \phi_{\bb}}}\right\rvert^2 
  \speq \frac {\braket{\phi_\bb}{\phi_{S,\bb}}^2} {\norm{ \phi_{S,\bb}}^2\norm{ \phi_{\bb}}^2}
  \speq \frac {\braket{\phi_\bb}{\phi_\bb}^2} {\norm{ \phi_{S,\bb}}^2\norm{ \phi_{\bb}}^2}
  \speq \frac {\norm{ \phi_{\bb}}^2}{\norm{ \phi_{S,\bb}}^2} \enspace.
\]
Define $n_{t,T} \eqdef \size{ \set{\bi\in T^t:\range(\bi)=T } }$. 
By the definition of the states and \Cref{claim_6.2,claim_6.3} we have
\begin{align*}
\frac {\norm{ \phi_{\bb}}^2}{\norm{ \phi_{S,\bb}}^2}
    & \speq  \frac{ \size{ \set{(\bi,\bx): \bi \in \support(\bb)^t, \bx \in \naturals_p^t \,, \ms(\bi,\bx)=\bb }} }{ \size{ \set{ (\bi,\bx): \bi \in S^t, \bx \in \naturals_p^t \,, \ms(\bi,\bx)=\bb, }} } \\
    & \speq  \frac{p^{t-|\support(\bb) |} \, n_{t,\support(\bb)}  }{\displaystyle \sum_{T \,:\, \support(\bb)\subseteq T\subseteq S} p^{t-|T|} \, n_{t,T}  }\\
    & \speq \frac{n_{t,\support(\bb)} }{\displaystyle \sum_{T \,:\, \support(\bb)\subseteq T\subseteq S}p^{|\support(\bb)|-|T|}\, n_{t,T} }\\
    & \speq \frac{n_{t,\support(\bb)}  }{\displaystyle n_{t,\support(\bb)}+\sum_{T \,:\,\support(\bb)\subsetneq T\subseteq S}p^{|\support(\bb)|-|T|}\, n_{t,T}} \\
    & \spge \frac {1}{1+\frac {k^t}{p}} \enspace.
\end{align*}
Thus, the $\ell_1$ distance is at most $2\sqrt {1- \frac {1}{1+\frac {k^t}p}} \leq 2\sqrt {\frac {k^t} p}$. This proves the lemma.
\end{proof}

\section{Properties of the Classical Coupon Collector process}
\label{sec_clas_cc}

In this section, we derive some properties of the Coupon Collector problem for completeness. These are used in the classical case to show a separation between proper and improper PAC learning, and also turn out to be relevant to the quantum case. 

The parameters in the proof of \Cref{lem_cc_coll_coupons} below are based on the argument in Ref.~\cite{cc_props_ref}.
\begin{lemma}
\label{lem_cc_coll_coupons}
Let~$\delta \in [0,1)$ and~$l$ be a non-negative integer.
Given independent, uniform samples from a set~$S \subset [n]$ of size~$k$, at least~$k \ln \frac {k+1}{l+1} + k\ln (1-\delta)$ samples are required to observe at least~$k-l$ distinct elements of~$S$ with probability at least~$1-\delta$.
\end{lemma}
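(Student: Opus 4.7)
The plan is to reduce the claim to a lower-tail bound on the random variable~$T_{k-l}$, defined as the index of the sample at which the $(k-l)$-th distinct coupon is first observed. By the standard decomposition of the Coupon Collector process, $T_{k-l} \speq \sum_{j = 1}^{k-l} G_j$, where the~$G_j$ are independent geometric random variables on~$\{1,2,\ldots\}$ with success probabilities $p_j \eqdef (k-j+1)/k \,$. The event of observing at least~$k-l$ distinct coupons in~$t$ samples is precisely~$\{T_{k-l} \leq t\}$, so it suffices to show that $\Pr[T_{k-l} \leq t] < 1-\delta$ for every~$t$ strictly below the bound stated in the lemma.

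To this end, I would apply a one-sided Chernoff-type inequality in the lower tail. For any $\lambda > 0$, Markov's inequality applied to $\e^{-\lambda T_{k-l}}$ yields
\[
\Pr\bigl[T_{k-l}\leq t\bigr]
  \sple \e^{\lambda t} \, \expct\bigl[\e^{-\lambda T_{k-l}}\bigr]
  \speq \e^{\lambda t} \prod_{j=1}^{k-l} \frac{p_j \, \e^{-\lambda}}{1-(1-p_j)\,\e^{-\lambda}} \enspace,
\]
using independence of the~$G_j$ and the moment generating function of a geometric variable. The decisive step is a telescoping choice of~$\lambda$: taking $\lambda \eqdef \ln \tfrac{k+1}{k}$, so that $\e^{-\lambda} = k/(k+1)$, a short calculation reduces each factor of the product to $(k-j+1)/(k-j+2)$, and the full product telescopes to $(l+1)/(k+1)$. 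This yields the compact bound
\[
\Pr\bigl[T_{k-l}\leq t\bigr]
  \sple \left(\frac{k+1}{k}\right)^{\!t} \cdot \frac{l+1}{k+1} \enspace.
\]

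To finish, I would ensure this upper bound is strictly less than~$1-\delta$. Taking logarithms, the inequality becomes $t \ln \bigl(1 + \tfrac{1}{k}\bigr) < \ln \frac{(k+1)(1-\delta)}{l+1}$, and invoking the standard bound $\ln(1 + \tfrac{1}{k}) \leq \tfrac{1}{k}$ it suffices to have $t < k \ln \tfrac{k+1}{l+1} + k \ln(1-\delta)$, which is exactly the contrapositive of the claimed sample lower bound. I do not foresee any substantial obstacle; the only piece of craft is the telescoping choice of $\lambda$, which is what captures the leading term $k \ln \tfrac{k+1}{l+1}$ matching the expected waiting time $k(H_k - H_l)$ in the classical Coupon Collector process, rather than the weaker $\Order(k)$-type bound one would get from, say, a reverse Markov argument applied to the count of uncollected coupons.
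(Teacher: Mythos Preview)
Your proposal is correct and follows essentially the same approach as the paper: decompose the waiting time as a sum of independent geometrics, apply the exponential Markov inequality, and choose~$\lambda$ so that the product telescopes to~$(l+1)/(k+1)$. The only cosmetic difference is where the approximation~$1 + \tfrac{1}{k} \le \e^{1/k}$ is invoked---the paper takes~$\lambda = \tfrac{1}{k}$ and uses it inside the product to force the telescoping, whereas you take~$\lambda = \ln\tfrac{k+1}{k}$ to make the telescoping exact and apply the same inequality at the end; the two yield the identical bound.
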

\begin{proof}
Suppose we draw samples from~$S$ until~$k-l$ distinct elements are observed. Then, the total number of samples is the sum of~$k-l$ independent geometric random variables. Let~$X_i$ be the number of samples drawn after the~$(i-1)$-th distinct element is observed until the $i$-th distinct element is observed (with the last sample included). Then~$X_i$ has a geometric distribution with parameter~$\frac{k-i+1}{k}$.

For any~$s, \lambda \ge 0$ we have
\begin{align*}
\Pr \left[ \sum_{i=1}^{k-l} X_i \le s \right] 
  & \speq \Pr \left[ \e^{-\lambda s} \leq \prod_{i=1}^{k-l} \e^{-\lambda X_i} \right]  \\
  & \sple \e^{\lambda s} \prod_{i=1}^{k-l} \mathbb E \left[ \e^{-\lambda X_i} \right] & \text{(By the Markov Inequality)} \\
  & \speq \e^{\lambda s} \prod_{i=1}^{k-l} \frac{\frac {k-i+1}k \e^{-\lambda} }{1-\frac{i-1}k \e^{-\lambda}}\\
  & \speq \e^{\lambda s}\prod_{i=1}^{k-l}\frac{\frac {k-i+1}k }{ \e^{\lambda}-\frac{i-1}k} \enspace.
\end{align*}
Let~$s \eqdef k\ln \frac{k+1}{l+1}+ck$ for some $c\in \mathbb R$.
Set $\lambda \eqdef \frac 1 k$ and notice that $\exp \left(\frac s k \right) = \frac {k+1}{l+1} \e^{c}$, and $\exp \left(\frac 1 k \right) \geq 1+\frac 1 k \,$. Using these, we get that
\begin{align*}
\Pr \left[ \sum_{i=1}^{k-l} X_i \le s \right] 
  & \sple \e^{c} \frac {k+1}{l+1} \prod_{i=1}^{k-l}\frac{\frac {k-i+1}k }{1+\frac{1}{k} -\frac{i-1}{k}} \\
  & \sple \e^{c} \frac {k+1}{l+1} \prod_{i=1}^{k-l}\frac{{k-i+1} }{k-i+2} \\
  & \speq \e^c \enspace.
\end{align*}
If~$c < \ln (1 - \delta)$, the probability that~$k-l$ distinct elements are observed with~$s$ samples is~$< 1 - \delta$. The lower bound stated in the lemma follows.
\end{proof}

The following lemma bounds the probability of correctly guessing most of the elements of an unknown set, when asked to guess a set of the same size.
\begin{lemma}
\label{lem-guess-set}
Let~$l,m$ be positive integers such that~$l \ge 10m$.
Suppose~$T \subseteq [l+5m]$ is an unknown subset of size~$l$. Then, if~$T' \subset [l+5m]$ is a uniformly random subset of size~$l$, the probability that it contains at least~$l-m$ elements of $T$ is at most~$\frac 1 2$.
\end{lemma}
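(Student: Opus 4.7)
The plan is to reformulate the event in terms of $\bar T \eqdef [l+5m]\setminus T$ (a set of size $5m$) and apply the Hoeffding bound for hypergeometric sampling from \eqref{bound_hyp_geo}. Since $|T| = |T'| = l$, the condition $|T \cap T'| \ge l - m$ is equivalent to $X \le m$, where $X \eqdef |T' \cap \bar T|$. So the goal reduces to showing $\Pr[X \le m] \le \tfrac{1}{2}$.

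Next, I will identify the distribution of $X$. When $T'$ is uniformly random of size $l$ from $[l+5m]$ and $\bar T$ is a fixed set of size $5m$, the variable $X$ is hypergeometric with mean $\frac{5ml}{l+5m}$. Using the symmetry of hypergeometric sampling (swapping the roles of ``samples drawn'' and ``red balls''), I will view this equivalently as drawing $t \eqdef 5m$ balls without replacement from a population of $n \eqdef l+5m$ in which $\alpha n \eqdef l$ are ``red'' (representing the elements of $T'$). This relabeling is the one subtle step: it lets us invoke \eqref{bound_hyp_geo} with the smaller sample size $5m$ rather than $l$, giving a bound that stays small uniformly over the whole regime $l \ge 10m$. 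The naive view with $t = l$ would degrade as $l$ grows and fail to yield $\tfrac 12$.

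Finally, define $\lambda \eqdef \alpha t - m = \frac{m(4l-5m)}{l+5m}$. A short derivative computation shows $\lambda$ is strictly increasing in $l$, hence $\lambda \ge \lambda(10m) = 7m/3$ for all $l \ge 10m$. Applying \eqref{bound_hyp_geo} gives
\[
\Pr[X \le m] \;\le\; \exp\!\left(-\frac{2\lambda^2}{5m}\right) \;\le\; \exp\!\left(-\frac{98m}{45}\right) \;\le\; \exp(-98/45) \;<\; \tfrac{1}{2},
\]
using $m \ge 1$ in the last inequality. The main obstacle to watch out for is precisely the sample-size choice just discussed; once the symmetry of the hypergeometric distribution is invoked, the rest is bookkeeping.
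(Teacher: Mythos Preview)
Your proof is correct, and the overall strategy---Hoeffding's inequality for the hypergeometric distribution---is the same as the paper's, but the execution differs in a useful way. The paper first reduces to the boundary case $l = 10m$ by asserting (without proof) that the probability is monotone decreasing in $l$, and then applies the Hoeffding bound with sample size $t = l = 10m$, obtaining $\exp(-98m/90)$. You instead invoke the symmetry of the hypergeometric law to swap the roles of the sample and the ``red'' set, applying Hoeffding with the smaller sample size $t = 5m$; the deviation $\lambda(l)$ is then shown explicitly to be increasing in $l$, so the worst case is again $l = 10m$ with $\lambda = 7m/3$, yielding the sharper bound $\exp(-98m/45)$. Your route thus sidesteps the paper's unproved monotonicity claim and gives a tighter constant, at the cost of invoking the (standard) hypergeometric symmetry.
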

\begin{proof}
For fixed~$m$, the probability in question decreases as~$l$ increases, so we prove the result for~$l \eqdef 10m$.

For~$i \in [l]$, let $X_i$ be the random variable indicating whether the $i$-th element of~$S$ (say, in increasing order) is contained in~$T'$. Note that~$\sum X_i$ is the number of elements of~$T$ contained in~$T'$. We have $\mathbb E[X_i] = \frac{l}{l+5m}=\frac 2 3$, and so $\mathbb E \left[ \sum_i X_i \right] =\frac {l^2}{l+5m}=\frac{20}3 m$. By the Hoeffding bound for the hypergeometric series (see Eq.~\eqref{bound_hyp_geo}),
\begin{align*}
\Pr \left[ \sum X_i\geq l-m \right] 
  & \speq \Pr\left[\sum X_i-\frac {20} 3 m\geq \frac {7} {3} m\right]\\
  & \sple \exp\left(-\frac {98}{90} m \right) \enspace.  
\end{align*}
This is at most~$\frac 1 \e \leq \frac 1 2 $.
\end{proof}

\Cref{lem-guess-set} helps bound the probability of properly learning the unknown set in the Coupon Collector problem.
\begin{lemma}
\label{clas_coup_coll}
Let~$\delta \in [0,1)$.
Given independent, uniformly random samples from an unknown set $S\subset [k+5m]$ with~$|S|=k$, at least $k \ln \frac {k+1}{10m+1}+k \ln(1-2\delta) $ samples are required in order to learn the set $S$ with at most $m\geq 1$ mismatches, with average probability of failure at most~$\delta$ when~$S$ is chosen uniformly at random.
\end{lemma}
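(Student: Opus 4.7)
The plan is to reduce \Cref{clas_coup_coll} to \Cref{lem_cc_coll_coupons,lem-guess-set} by splitting the failure event according to how many distinct coupons have been observed. Suppose, for contradiction, that some algorithm~$\cA$ uses $t < t_0 \eqdef k\ln \frac{k+1}{10m+1} + k\ln(1-2\delta)$ samples and has average failure probability at most~$\delta$ when $S$ is drawn uniformly at random from the size-$k$ subsets of $[n]$, where $n \eqdef k+5m$. Let $O \subseteq S$ be the (random) set of distinct coupons observed in the $t$ samples and set $l \eqdef k - |O|$.

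First I would apply \Cref{lem_cc_coll_coupons} with its parameter $l$ set to $10m$ and its failure parameter set to $2\delta$, which are exactly the settings that put the required sample count at $t_0$. Since $t < t_0$, this gives
\[
\Pr\!\big[\,|O| < k-10m\,\big] \;>\; 2\delta
\]
for every fixed~$S$, and hence also when averaged over a uniformly random~$S$.

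Next I would condition on a sample realisation in which $|O| < k - 10m$, so that $l > 10m$. Since every element of~$O$ is certainly in~$S$, an optimal learner loses nothing by enforcing $O \subseteq S'$ (swapping in any $i \in O \setminus S'$ weakly improves the match with~$S$), so write $S' = O \cup B$ with $|B| = l$ and $B \subseteq [n]\setminus O$. The success criterion $|S'\setminus S| \le m$ then becomes $|B \cap T| \ge l - m$, where $T \eqdef S\setminus O$. Given the full sample sequence, the posterior on~$T$ is uniform over the size-$l$ subsets of $[n]\setminus O$: the prior on~$S$ is uniform, and the likelihood of any sample sequence supported on~$O$ equals $(1/k)^t \mathbbm{1}[O \subseteq S]$, depending on~$S$ only through the event $\{O \subseteq S\}$. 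Applying \Cref{lem-guess-set} to the universe $[n]\setminus O$ of size $l+5m$ (using the symmetry between fixing~$B$ with~$T$ uniform and vice versa), the conditional success probability is at most~$1/2$, so the conditional failure probability is at least~$1/2$.

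Combining the two bounds, the overall failure probability of~$\cA$ strictly exceeds $2\delta \cdot \tfrac{1}{2} = \delta$, the desired contradiction. The main obstacle, to the extent there is one, is the conditional step: correctly identifying that the posterior on~$T$ given the \emph{entire\/} sample sequence (multiplicities and order included) is uniform, and justifying the ``$O \subseteq S'$'' reduction. Both points rest on standard exchange/symmetry arguments, but they do require care because the adversarial model here is average-case over~$S$ rather than worst-case. The rest of the argument is bookkeeping around the two cited lemmas.
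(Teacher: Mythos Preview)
Your proposal is correct and follows essentially the same two-step approach as the paper: invoke \Cref{lem_cc_coll_coupons} with parameters $(10m,2\delta)$ to get probability $>2\delta$ of seeing fewer than $k-10m$ distinct coupons, then use \Cref{lem-guess-set} to cap the conditional success probability at $1/2$. The only difference is that you make explicit the posterior-uniformity and $O\subseteq S'$ swap arguments that the paper leaves implicit in the phrase ``the best strategy is to guess the remaining $l$ elements uniformly at random.''
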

\begin{proof}
By the Lemma \ref{lem_cc_coll_coupons}, we know that if the number of samples drawn is strictly smaller than~$k \ln \frac {k+1}{10m+1}+k\ln (1 - 2\delta)$, then there is at least $2\delta$ probability of observing at most $k-10m$ distinct elements from the set $S$.

Suppose~$k-l$ distinct elements were observed for some $l\geq 10m$, since we are interested in decreasing the average probability of failure for uniformly random~$S$, the best strategy is to guess the remaining $l$ elements uniformly at random. By \Cref{lem-guess-set}, the probability that we produce a guess of size~$k$ with up to~$m$ mismatches is at most $\frac 1 2$. Hence, with probability at least $2\delta \times \frac 12$, we fail to learn~$S$ up to $m$ mismatches.
\end{proof}

\end{document}